
\documentclass[reqno]{amsart}

\topmargin  = 0.0 in
\leftmargin = 0.9 in
\rightmargin = 1.0 in
\evensidemargin = -0.10 in
\oddsidemargin =  0.10 in
\textheight = 8.5 in
\textwidth  = 6.6 in
\setlength{\parskip}{2mm}
\setlength{\parindent}{0mm}

\usepackage{multicol}
\usepackage{amsmath}
\usepackage{graphicx}
\usepackage[misc,geometry]{ifsym}
\usepackage{hyperref}
\usepackage{amssymb,latexsym,amsfonts,amsmath}
\usepackage{caption,subcaption}
\usepackage{graphicx}
\usepackage{paralist}
\usepackage{dsfont}
\usepackage{booktabs}
\usepackage{eso-pic}
\usepackage{epsfig}
\usepackage{url}
\usepackage{epstopdf}
\usepackage{tikz}
\usetikzlibrary{automata,positioning}
\usepackage{bm}
\usepackage{verbatim}

\usepackage{color}
\usepackage{mathtools}

\usepackage{etoolbox}
\AtBeginEnvironment{thebibliography}{\linespread{1}\selectfont}

\setlength{\textfloatsep}{1pt}

\usepackage{cite}

\usepackage{mdframed,lipsum}

\usepackage{mathrsfs}

\usepackage{algorithm}
\usepackage{algorithmicx}
\usepackage[noend]{algpseudocode}
\makeatletter
\newcommand{\ALOOP}[1]{\ALC@it\algorithmicloop\ #1%
  \begin{ALC@loop}}
\newcommand{\ENDALOOP}{\end{ALC@loop}\ALC@it\algorithmicendloop}

\makeatother\algnewcommand{\Initialize}[1]{%
	\State \textbf{Initialize:}
	\Statex \hspace*{\algorithmicindent}\parbox[t]{.8\linewidth}{\raggedright #1}
}
\algnewcommand\And{\textbf{and} }

\usepackage{url}
\usepackage{graphicx}
\usepackage{caption}

\newcommand{\R}{{\mathbb{R}}}

\newcommand{\B}{\mathrm{B}}
\newcommand{\V}{\mathcal{V}}
\newcommand{\E}{\mathcal{E}}
\newcommand{\G}{\mathcal{G}}

\newcommand{\N}{{\mathbb{N}}}

\newcommand{\bfm}{\mathbf}

\newcommand{\diag}{\mathrm{diag}}
\newcommand{\xsig}{\bfm{x}_{x_0,\bm{\sigma}}}
\newcommand{\np}{\mathrm{np}}
\newcommand{\safe}{\mathrm{Safe}}
\newcommand{\qedsymb}{\hfill\blacksquare}

\newtheorem{theorem}{Theorem}[section]
\newtheorem{lemma}[theorem]{Lemma}

\newtheorem{problem}[theorem]{Problem}
\newtheorem{proposition}[theorem]{Proposition}

\newtheorem{definition}[theorem]{Definition}
\newtheorem{example}[theorem]{Example}

\newtheorem{remark}[theorem]{Remark}

\numberwithin{equation}{section}

 \usepackage{enumerate}
\usepackage[shortlabels]{enumitem}

\usepackage[many]{tcolorbox}
\usetikzlibrary{calc}
\tcbuselibrary{skins}

\newtcolorbox{resp}[1][]{%
	enhanced jigsaw,%
	colback=gray!5!white,%
	colframe=gray!80!black,%
	size=small,%
	boxrule=1pt,%
	halign title=flush center,%
	coltitle=black,%
	breakable,%
	drop shadow=black!50!white,%
	attach boxed title to top left={xshift=1cm,yshift=-\tcboxedtitleheight/2,yshifttext=-\tcboxedtitleheight/2},%
	minipage boxed title=3cm,%
	boxed title style={%
		colback=white,%
		size=fbox,%
		boxrule=1pt,%
		boxsep=2pt,%
		underlay={%
			\coordinate (dotA) at ($(interior.west) + (-0.5pt,0)$);
			\coordinate (dotB) at ($(interior.east) + (0.5pt,0)$);
			\begin{scope}[gray!80!black]
				\fill (dotA) circle (2pt);
				\fill (dotB) circle (2pt);
			\end{scope}
		}%
	},%
	#1%
}


\renewcommand{\emptyset}{{\varnothing}}

\title{On the Completeness and Ordering of Path-Complete Barrier Functions} 


\author{Mahathi Anand$^1$}
\author{Rapha\"el Jungers$^2$}
\author{Majid Zamani$^3$}
\author{Frank Allg\"ower$^4$}
\address{$^1$Munich Institute of Robotics and Machine Intelligence (MIRMI),
Technical University of Munich (TUM), Germany}
\email{mahathi.anand@tum.de}
\address{$^2$Department of Mathematical Engineering, Universit\'e catholique de Louvain.}
\email{raphael.jungers@uclouvain.be}
\address{$^3$University of Colorado Boulder, USA}
\email{majid.zamani@colorado.edu}
\address{$^4$Institute for Systems Theory and Automatic Control, University of Stuttgart, Germany.}
\email{frank.allgower@ist.uni-stuttgart.de}

\begin{document}

\begin{abstract}                          
This paper is concerned with path-complete barrier functions which offer a graph-based methodology for verifying safety properties in switched systems. The path-complete framework leverages algebraic (barrier functions) as well as combinatorial (graph) components to characterize a set of safety conditions for switched systems, thus offering high flexibility (two degrees of freedom) in searching for suitable safety certificates. In this paper, we do not propose any new safety criteria. Instead, we further investigate the role that the combinatorial component plays in the safety verification problem. First, we prove that \emph{path-completeness}, which is a property on a graph that describes the switching sequences, is \emph{necessary} to obtain a set of valid safety conditions. As a result, the path-complete framework is able to provide a complete characterization of safety conditions for switched systems. Furthermore, we provide a systematic methodology for comparing two path-complete graphs and the conservatism associated with the resulting safety conditions. Specifically, we prove that under some conditions, such as when there exists a simulation relation between two path-complete graphs, it is possible to conclude that one graph is always able to provide less conservative safety conditions than another, independent of the algebraic properties of the switched system and the template of the barrier function under consideration. Such a result paves the way for a systematic use of the path-complete framework with barrier functions, as one can then consistently choose the appropriate graph that provides less conservative safety conditions. 
\end{abstract}

\maketitle
\thispagestyle{empty}
\pagestyle{empty}

\maketitle
\thispagestyle{empty}
\pagestyle{empty}

\section{Introduction}

Traditionally, control theorists have focused on studying systems with purely continuous or discrete behaviors. However, many physical systems behave in a hybrid manner. Examples include chemical processes and energy systems~\cite{hybrid_chem}, robotics~\cite{hybrid_robotics}, biological systems~\cite{hybrid_bio}, and air traffic management systems~\cite{hybrid_traffic}. On the other hand, switched systems~\cite{Liberzon2003SwitchingIS} exhibit continuous-time behavior with isolated discrete switching modes and can be used as abstractions for more complex hybrid systems. However, their analysis is still a challenging problem~\cite{switched_opt,switched_stab} that requires careful consideration.  

An important problem concerning switched systems is the analysis of safety properties. The safety verification problem entails ascertaining whether all the behaviors of the switched system avoid visiting some unsafe or undesirable configurations. One way to solve this problem is to construct simpler finite abstractions and analyze the abstract systems, e.g., in~\cite{switched_abs1,switched_abs2,switched_abs3}, however, they suffer severely from the curse of dimensionality. More recently, discretization-free methods using barrier functions~\cite{prajna_safety_2004,prajna_barrier_2006} have shown great promises in proving safety properties effectively and efficiently. They are real-valued functions that assign higher values to the unsafe states than the initial states, thus serving as a \emph{barrier} between reachable and unsafe regions. Hence, the safety verification problem is reduced to finding suitable barrier functions. 

\textbf{Related Literature.} Several results in the literature utilize barrier functions in the context of switched and hybrid systems~\cite{prajna_safety_2004,cbf_switch1,cbf_switch2,asymp_stab_safe,exp_barr,state-time-safety,anand_switched,stoc_swit,stoc_hyb,nejati_comp_stoc}. The existing approaches can be divided in two groups. The first one is the use of common barrier functions~\cite {cbf_switch2,cbf_switch1}, where a single barrier function is used to guarantee safety for all the discrete operating modes of the switched system. Unfortunately, the conditions imposed on the barrier function are quite restrictive, and one may not be able to find suitable common barrier functions even when the system is safe. To alleviate the conservatism, multiple barrier functions were proposed in~\cite{prajna_safety_2004,asymp_stab_safe,exp_barr,state-time-safety}. Here, a barrier function is assigned to each operating mode of the switched (hybrid) system, and safety guarantees are obtained by ensuring not only mode safety but also safety at switching instances. 

More recently, inspired by path-complete Lyapunov functions for stability analysis~\cite{pclf_original,pclf_common,geom_pclf,pclf_ordering1,pclf_ordering2,temp_lifts_pclf}, we introduced in~\cite{anand_pcbf} a new framework for the safety of \emph{linear} switched systems under arbitrary switching, based on a notion of \emph{path-complete barrier functions}. This approach extends and generalizes the common and multiple barrier functions framework by introducing a combinatorial component to the safety verification problem via a graph. In particular, a so-called \emph{path-complete graph} is used to encode the arbitrary switching sequences occurring in the system. Then, path-complete barrier functions are defined as a collection of functions so that each node of the graph is assigned a distinct barrier function. The edges within the graph enable algebraic conditions imposed on the barrier functions, such that the structure of the graph, together with the algebraic conditions, provide sufficient conditions for safety. It was also shown in~\cite{anand_pcbf} that depending on the choice of path-complete graphs, the approach provides less conservative results compared to the existing common and multiple barrier function framework. However, some unresolved questions regarding the path-complete framework remain. In this paper, we investigate in detail the special role that the choice of graphs plays in solving the safety verification problem. In this context, we address important questions concerning the completeness and ordering of path-complete barrier functions.  

\textbf{Contributions.} The contributions of this paper in comparison to~\cite{anand_pcbf} are three-fold. First, while~\cite{anand_pcbf} focused primarily on linear switched systems, we remove in this paper any such restrictions and provide results for general nonlinear systems. Second, we provide completeness arguments for the path-complete framework proposed in~\cite{anand_pcbf}. It is known that path-complete barrier functions only provide sufficient conditions for safety. In other words, given a system and a suitable path-complete graph, failure to find a barrier function does not mean that the system is unsafe. However, it is not yet clear whether the chosen graph is required to be path-complete to begin with. Therefore, in this paper, we ask whether path-completeness is necessary for proving safety properties, and answer this question affirmatively by constructing appropriate counterexamples, i.e., by constructing suitable unsafe systems where the lack of the path-completeness property in graphs may lead to invalid safety certificates. This enables us to prove that the path-complete framework encapsulates the complete set of graph-based conditions for the safety verification of switched systems.

Thirdly, we investigate how the choice of graphs can influence the conservatism of the resulting safety conditions. In general, for a given system, the choice of a path-complete graph over which safety conditions are defined is not unique. As a result, different path-complete graphs lead to different algebraic conditions for safety verification, and the conservatism of the resulting conditions may differ. In this paper, we leverage techniques from graph theory to compare two path-complete graphs in terms of the conservatism of the resulting safety conditions. In particular, we define some simulation relations on the graphs under which one can order two path-complete graphs according to their conservatism, independently of the system or the barrier function template under consideration. Then, for any two path-complete graphs with some simulation relation, it is possible to objectively conclude in a complete manner, which graph results in less conservative safety certificates. This allows to consistently choose an appropriate graph that provides less conservative safety conditions, thus bridging the existing gap in the path-complete framework and making the approach more useful for verifying safety properties for switched systems systematically. 

In addition, we support our theoretical results with several illustrative examples and experiments. First, we utilize a nonlinear two-car platoon system to demonstrate the applicability of path-complete barrier functions for the safety of switched nonlinear systems. Then, to illustrate the completeness of the path-complete barrier function framework, we construct counterexamples, both linear and nonlinear, to demonstrate that unsafe systems can yield invalid safety guarantees when graphs are chosen to be non path-complete. Finally, we conduct experiments to investigate the consequences of the choice of graphs in the resulting safety certificates. We show that in the presence of a simulation relation between two path-complete graphs, one graph always performs better than the other. 

We note that our results echo similar existing ones in the context of path-complete Lyapunov functions presented in~\cite{jungers_characterization_2017,geom_pclf,pclf_ordering1}. Particularly, in~\cite{jungers_characterization_2017}, the necessity of path-completeness for stability criteria were provided, and~\cite{geom_pclf, pclf_ordering1} tackled the comparison between path-complete Lyapunov functions. Though our results are analogous to these, we point out that there is no reason (to the best of our knowledge) to assume that these results can be extended directly from the stability to the safety framework. Indeed, it turns out that the proofs are significantly different, even though they are based on similar intuition. This is due to the fact that the algebraic conditions for stability and safety are also quite different.
 
\textbf{Outline.} In Section~\ref{sec:prelim}, we first highlight the necessary notations and then present the required background on switched systems as well as the safety verification problem. Consequently, in Section~\ref{sec:pcbf}, we summarize briefly the path-complete framework for the safety verification of switched systems that was first proposed in~\cite{anand_pcbf}, and extend the framework to general nonlinear systems. Section~\ref{sec:nec} presents the first main result of this paper. Here, we address the completeness of the path-complete barrier function framework and construct counterexamples which show that path-completeness of graphs is indeed a necessary condition for safety verification. Section~\ref{sec:comparison} then addresses the ordering of path-complete barrier functions. Here, necessary and sufficient conditions on any two graphs are presented in order to compare the conservatism of the safety conditions resulting from the graphs. In Section~\ref{sec:cs}, we present some case studies and experiments to validate the theoretical results obtained in the paper. Finally, we summarize our results and raise a few open questions in Section~\ref{sec:conclusion}.

\section{Notations and Problem Statement}
\label{sec:prelim}

\subsection{Notations}
\label{subsec:not}

Throughout the article, the set of real and non-negative integers are denoted by $\R$ and $\N$, respectively. In addition, appropriate subscripts are used to refer to subsets of $\R$, and $\N$, respectively, e.g., $\N_{\geq 1}$ for representing the set of positive integers. The notation $\R^n$ denotes a real space of dimension $n$ and $\R^{m \times n}$ denotes a real space of dimension $m \times n$. For a vector (denoted in lowercase) $x \in \R^n$, $x[i]$ denotes the $i^{\text{th}}$ element of $x$, $1 \leq i \leq n$. Similarly, for a matrix (denoted in uppercase) $X \in \R^{m \times n}$, the $(i,j)^{\text{th}}$ element is represented by $X[i,j]$, where $1 \leq i,j \leq n$. For a square matrix $X \in \R^{2n \times 2n}$, we write the $i^{\text{th}}$ $2 \times 2$ diagonal block of $X$ as $X[i] = \begin{bmatrix}
    X[2i-1, 2i-1] & X[2i-1, 2i] \\
    X[2i, 2i-1] & X[2i, 2i] \\
\end{bmatrix}$.
Given a vector $x \in \R^n$, $\diag(x) \in \R^{n \times n}$ is the diagonal matrix with diagonal elements $X[i,i] = x[i]$, $1 \leq i \leq n$. The inequality $A \geq 0 $ (resp $\leq$) is element-wise. 
Moreover, $A^\top$ denotes the transpose of $A$. The set of all continuous functions with domain $\R^n$ and co-domain $\R$ are denoted by $\mathcal{C}(\R^n,\R)$. 

Finally, an alphabet $\Sigma$ is a finite set of letters, and a sequence (denoted by bold symbols) $\bfm{w} = (w_0 w_1 \ldots) \in \Sigma^\omega$ is an infinite concatenation of letters, i.e., $w_i \in \Sigma$, for all $i \geq 0$. Similarly, a finite sequence of length $k$ is a sequence $\bfm{w}_k = (w_0 w_1 \ldots w_{k-1}) \in \Sigma^k$. We use $\bfm{w}(i) = w_i$ to denote the $i^{\text{th}}$ position of the sequence $\bfm{w}$. 

\subsection{Problem Definition}
\label{subsec:prob}

In this work, we consider nonlinear discrete-time switched dynamical systems (dt-SwS) of the form
\begin{equation} \label{eq:sysdyn}
  S := \bfm x(t+1)=f_{\bm \sigma(t)}(\bfm x(t)),
\end{equation}
where $\bfm{x}(t) \in X \subseteq \R^n$ and $\bm{\sigma}(t) \in \Sigma = \{1,\ldots,m\}$, $m \in \N_{\geq 1}$, denote the state of the system as well the switching mode at time $t \in \N$, such that each mode corresponds to the dynamics $f_{\bm \sigma(t)} \in \{f_1, \ldots, f_m\}$. We call $\bm{\sigma} = (\sigma_0 \sigma_1 \ldots)$ a switching sequence of the system $S$, and we denote by $\bfm{x}_{x_0,\bm{\sigma}} = (x_0 \bfm x(1) \bfm x(2) \ldots)$ the infinite sequence generated from the initial condition $\bfm{x}(0) = x_0$ under the switching sequence $\bm{\sigma}$.

This article deals with the safety verification problem for dt-SwS under arbitrary switching sequences, i.e., to ensure that the dt-SwS $S$ starting from a given initial set does not reach an unsafe set for any possible arbitrary switching sequence $\bm{\sigma}$. We present the formal definition of safety as follows. 
\begin{definition}[System Safety] \label{def:safety}
Given an initial set $X_0 \subseteq X$ and an unsafe set $X_u \subseteq X$ such that $X_0 \cap X_u \neq \emptyset$, a dt-SwS $S$ as in~\eqref{eq:sysdyn} defined over $\Sigma$ is said to satisfy the safety specification denoted by $\safe(X_0, X_u)$ if for any $x_0 \in X_0$ and any $\bm{\sigma} \in \Sigma^\omega$, one has that $\xsig(t) \notin X_u$, $\forall t \in \N$.  
\end{definition}
The safety verification problem can then be formulated as follows. 
\begin{problem}[Safety Verification] \label{prob:safety}
Given a dt-SwS as in~\eqref{eq:sysdyn} over $\Sigma$, an initial set $X_0$, and an unsafe set $X_u$, verify that $S$ satisfies $\safe(X_0,X_u)$.
\end{problem}

In our previous work~\cite{anand_pcbf}, we addressed Problem~\ref{prob:safety} by proposing a new barrier function-based approach, called path-complete barrier functions. This approach leverages algebraic (barrier functions) as well as combinatorial (graph) components to characterize a set of safety conditions for dt-SwS. In the following section, for the sake of completeness, we briefly review the path-complete framework via suitable descriptions of graph-based barrier functions. 

\section{Path-Complete Framework for Safety Verification}
\label{sec:pcbf}

In this section, we discuss the combinatorial and algebraic components that 
constitute the path-complete framework for the safety verification of dt-SwS. 
The combinatorial component is a graph that allows encoding the switching sequences occurring in the dt-SwS. In particular, a graph is a pair $\G = (\V, \E)$ defined on the alphabet $\Sigma$, where $\V$ is the set of vertices with cardinality $|\V|$, and $\E$ is the set of edges $(v, \sigma, v') \in \E$, with $v,v' \in \V$, and $\sigma \in \Sigma$. 
We say that a sequence $\bm{\sigma} = (\sigma_0 \sigma_1 \ldots) \in \Sigma^\omega$ is accepted by $\G$ if one can match this sequence to a path in $\G$, i.e., there exists corresponding sequence of edges $\bm{\pi} = \big( (v_0, \sigma_0, v_1) (v_1, \sigma_1, v_2), \ldots \big)$ such that $(v_i, \sigma_i, v_{i+1}) \in \E$, for all $i \in \N$. Now, one may also consider a finite sequence $\bm{\sigma}_k = (\sigma_0\ldots\sigma_{k-1})$ of length $k$ and extend the definition of acceptance to finite paths. Then, it is not difficult to see that if there exists a finite sequence $\bm{\sigma}_k$ that cannot be accepted by $\G$, then any infinite sequence $\bm{\sigma}$ containing $\bm{\sigma_k}$ cannot be accepted by $\G$. We now present the definition of path-complete graphs that encode the arbitrary switching sequences possible in a dt-SwS $S$.   

\begin{definition}[Path-Complete Graphs] \label{def:pcgraph}
A graph $\G=(\V,\E)$ over the alphabet $\Sigma$ is said to be path-complete (PC) if any finite sequence $\bm{\sigma _k}$ of any length $k \in \N_{\geq 1}$ is accepted by $\G$. 
\end{definition}

\begin{figure}
  \centering
  \includegraphics[width=0.4\linewidth]{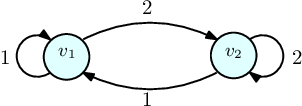}
  \caption{Example for a path-complete graph for a dt-SwS $S$ with $\Sigma = \{1,2\}.$}
  \label{fig:pcgraph}
\end{figure}

\begin{figure}
  \centering
  \includegraphics[width=0.4\linewidth]{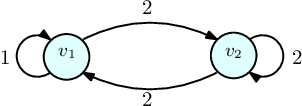}
  \caption{Example for a non path-complete graph for a dt-SwS $S$ with $\Sigma = \{1,2\}$. The sequence $\bm{\sigma}_3 = (121)$ cannot be accepted.}
  \label{fig:nonpc}
\end{figure}
An example of a path-complete graph for $\Sigma = \{1,2\}$ is shown in Figure~\ref{fig:pcgraph}. On the other hand, Figure~\ref{fig:nonpc} shows a graph that is not path-complete since it cannot accept any switching sequence $\bm{\sigma}$ containing the finite sequence $\bm{\sigma}_3 = (121)$. 

\begin{remark} \label{rem:pcchoice}
Note that given an alphabet $\Sigma$, the choice of a path-complete graph is in general non-unique. For example, reversing the direction of the edges $(v_1, 2, v_2)$ and $(v_2, 1, v_1)$ in Figure~\ref{fig:pcgraph} results in a new path-complete graph.
\end{remark}

The algebraic component of our framework is a set of real-valued functions that allows to characterize a set of inequalities via graphs. The following definition formalizes the notion of graph-based barrier functions, and as an extension, path-complete barrier functions, which are vital for addressing the safety verification problem (Problem~\ref{prob:safety}). 
\begin{definition}[Graph-Based Barrier Function] \label{def:pcbf}
Consider a dt-SwS $S$ over $\Sigma$ as in Definition~\ref{eq:sysdyn}, an initial set $X_0$, and an unsafe set $X_u$. A graph-based barrier function is a pair $\big (\G, \{\B_v, v \in \V\} \big)$, where $\G=(\V,\E)$ and $\B_v : X \rightarrow \R$ such that the following conditions are satisfied:
\begin{align}
  & \B_v(x) \leq 0, \quad && \forall x \in X_0, \ \forall v \in \V, \label{eq:pcbf1} \\
  & \B_v(x) > 0, \quad && \forall x \in X_u, \ \forall v \in \V, \label{eq:pcbf2} \\
  & \B_{v'}(f_{\sigma}(x)) \leq \B_v(x), && \forall x \in X, \ \forall (v, \sigma, v') \in \E. \label{eq:pcbf3}
\end{align}
In addition, $\big (\G, \{\B_v, v \in \V\} \big)$ is said to be admissible for $\G$ and $S$ if conditions~\eqref{eq:pcbf1}-\eqref{eq:pcbf3} are satisfied by $\{\B_v, v \in \V\}$.
\end{definition}
Moreover, a graph-based barrier function $\big (\G, \{\B_v, v \in \V\} \big)$ is said to be a \emph{path-complete barrier function} if the associated graph $\G$ is path-complete. In the following, we show that path-complete barrier functions can guarantee the satisfaction of safety specifications. 
\begin{theorem}
Consider a dt-SwS $S$ as in~\eqref{eq:sysdyn} over $\Sigma$, an initial set $X_0$ and unsafe set $X_u$, respectively. Suppose there exists a path-complete barrier function $\big (\G, \{\B_v, v \in \V\} \big)$ according to Definition~\ref{def:pcbf} for some path-complete graph $\G$. Then, $S$ satisfies $\safe(X_0,X_u)$. 
\end{theorem}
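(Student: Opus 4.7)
The plan is to prove the theorem by contradiction, exploiting path-completeness to exhibit, for every finite prefix of the trajectory, a matching path in $\G$ along which conditions~\eqref{eq:pcbf1}--\eqref{eq:pcbf3} force the barrier value at the current state to be non-positive.

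First, I fix an arbitrary initial condition $x_0 \in X_0$ and an arbitrary switching sequence $\bm{\sigma} \in \Sigma^\omega$, and consider the generated trajectory $\bfm{x}_{x_0,\bm{\sigma}}$. Suppose, for contradiction, that safety is violated, i.e., there exists some $t^\star \in \N$ with $\bfm{x}_{x_0,\bm{\sigma}}(t^\star) \in X_u$. The case $t^\star = 0$ is immediately ruled out by the standing assumption $X_0 \cap X_u = \emptyset$ (stated implicitly in Definition~\ref{def:safety}); so assume $t^\star \geq 1$.

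Next, I apply path-completeness to the finite prefix $\bm{\sigma}_{t^\star} = (\sigma_0 \sigma_1 \ldots \sigma_{t^\star-1})$. Since $\G$ is path-complete, there exist vertices $v_0, v_1, \ldots, v_{t^\star} \in \V$ such that $(v_i, \sigma_i, v_{i+1}) \in \E$ for every $i = 0,\ldots,t^\star-1$. I then iterate the edge condition~\eqref{eq:pcbf3} along this path: starting from $\B_{v_0}(x_0) \leq 0$ (by~\eqref{eq:pcbf1}, since $x_0 \in X_0$), a straightforward induction on $i$ using $\bfm{x}_{x_0,\bm{\sigma}}(i+1) = f_{\sigma_i}(\bfm{x}_{x_0,\bm{\sigma}}(i))$ yields
\begin{equation*}
\B_{v_{i+1}}\bigl(\bfm{x}_{x_0,\bm{\sigma}}(i+1)\bigr) \leq \B_{v_i}\bigl(\bfm{x}_{x_0,\bm{\sigma}}(i)\bigr), \qquad i = 0,\ldots,t^\star - 1,
\end{equation*}
and hence $\B_{v_{t^\star}}(\bfm{x}_{x_0,\bm{\sigma}}(t^\star)) \leq \B_{v_0}(x_0) \leq 0$. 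On the other hand, since $\bfm{x}_{x_0,\bm{\sigma}}(t^\star) \in X_u$, condition~\eqref{eq:pcbf2} forces $\B_{v_{t^\star}}(\bfm{x}_{x_0,\bm{\sigma}}(t^\star)) > 0$, a contradiction. Since $x_0$ and $\bm{\sigma}$ were arbitrary, we conclude that $S$ satisfies $\safe(X_0, X_u)$.

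I do not anticipate a serious obstacle here, but the one subtle point deserving care is that path-completeness gives a matching path for each finite prefix, not a single infinite path lifted from $\bm{\sigma}$; different values of $t^\star$ may yield different vertex sequences $v_0,\ldots,v_{t^\star}$. This is harmless because the contradiction is drawn purely from the finite chain associated with the offending time $t^\star$, and no consistency between the paths for different $t^\star$ is required. It is also worth noting where the fact that $\B_v$ is defined on all of $X$ (and not only on reachable states) matters: the descent inequality~\eqref{eq:pcbf3} is invoked along a fixed path chosen by $\G$ independently of which intermediate states actually lie in $X$, so we need~\eqref{eq:pcbf3} to hold pointwise on $X$, exactly as assumed in Definition~\ref{def:pcbf}.
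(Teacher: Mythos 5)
Your proof is correct and follows essentially the same route as the paper's: a proof by contradiction that uses path-completeness to match the finite switching prefix up to the offending time with a path in $\G$, chains condition~\eqref{eq:pcbf3} along that path starting from~\eqref{eq:pcbf1}, and contradicts~\eqref{eq:pcbf2}. Your additional remarks (handling $t^\star = 0$, independence of the matched paths across different $t^\star$, and the need for~\eqref{eq:pcbf3} to hold on all of $X$) are sound refinements of the same argument rather than a different approach.
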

\begin{proof}
The proof is established by contradiction. Assume that $S$ is not safe, i.e., $\exists t \in \N$ such that $\xsig(t) \in X_u$ for some $x_0 \in X_0$ and a sequence $\bm \sigma = (\sigma_0\sigma_1\ldots)$. Suppose that a PCBF $\big (\G, \{\B_v, v \in \V\} \big)$ also exists for some path-complete graph $\G$ such that the switching sequence $\bm \sigma$ up to length $t$ generates the path $\bm{\pi}_t = \big((v(0)\sigma(0)v(1)) \ldots (v(t-1),\sigma(t-1),v(t))\big)$, where $v(i) \in \V$, $0 \leq i \leq t-1$ is the node reached in $\G$ at time $t$. Then from condition~\eqref{eq:pcbf1}, $\B_{v(0)}(x_0) \leq 0$. Moreover, due to Definition~\ref{def:pcgraph} and from condition~\eqref{eq:pcbf3}, we can generate the sequence $\bm \sigma$ for which we get $\B_{v(t)}(x(t))=\B_{v(t)}(A_{\sigma(t-1)}x(t-1)) \leq \B_{v(t-1)}(A_{\sigma(t-2)}x(t-2)) \leq \ldots \leq \B_{v(0)}(x_0) \leq 0$. This is in contradiction with condition~\eqref{eq:pcbf2}, which requires that for all $v(t) \in \V$, $\B_{v(t)}(x(t)) > 0$. So the system must satisfy $\safe(X_0,X_u)$. $\qedsymb$
\end{proof}

The existence of path-complete barrier functions is in general only sufficient for safety, but not necessary. This means that given a dt-SwS $S$ and a path-complete graph $\G$ over the alphabet $\Sigma$, failure to find a path-complete barrier function does not necessarily mean that $S$ is unsafe. However, it is not yet clear whether the requirement for the graph $\G$ to be path-complete is a necessary condition for safety. 
In other words, we ask whether the existence of a graph-based barrier function as in Definition~\ref{def:pcbf} can guarantee safety satisfaction for $S$ even if the associated graph is not path-complete. Answering this question allows us to provide a complete set of graph-based safety criteria for switched systems via our proposed path-complete framework. 

\section{Necessity of Path-Completeness}
\label{sec:nec}
In this section, we show that for any non path-complete graph $\G_{np}$, there exists a dt-SwS $S$ and a safety specification
$\safe(X_0,X_u)$ such that there exists an admissible graph-based barrier function for $\G_{np}$ and $S$ even when $S$ is unsafe. That is, $\G_{np}$  cannot lead to a valid safety certificate. We now formally present the main result of the section.

\begin{theorem}[Necessity of Path-Completeness] \label{thm:nec}
For any non path-complete graph $\G_{np}=(\V_{np}, \E_{np})$ over $\Sigma$, there exists a dt-SwS $S_{np}$ as in~\eqref{eq:sysdyn} and a safety specification $\safe(X_0,X_u)$ such that the following holds:
\begin{enumerate}
\item $S_{np}$ violates $\safe(X_0,X_u)$, \label{ite:1nec}
\item There exists an admissible graph-based barrier function $\big (\G_{np}, \{\B_v, v \in \V_{np}\} \big)$. \label{ite:2nec}
\end{enumerate}
\end{theorem}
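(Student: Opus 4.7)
The plan is to exploit the failure of path-completeness in $\G_{np}$ to isolate a specific finite switching sequence that $\G_{np}$ cannot trace, and then to engineer a dt-SwS whose unsafe behavior occurs exclusively along that sequence. Because no node of $\G_{np}$ admits a full-length path labeled by this bad sequence, the chain of barrier inequalities needed to propagate unsafety through the graph is never completed, leaving enough slack to define admissible barrier functions even though the system is unsafe.

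Concretely, invoking the negation of Definition~\ref{def:pcgraph}, I would fix a finite sequence $\bm{\sigma}^{*}=(\sigma_0\sigma_1\ldots\sigma_{k-1})$, with $k\ge 1$, that is not accepted by $\G_{np}$. I would set $X=\R$, $X_0=\{0\}$, $X_u=\{k\}$, and define the dynamics by $f_{\sigma_i}(i)=i+1$ for each $i=0,\ldots,k-1$, while $f_\sigma(x)=0$ for every other pair $(\sigma,x)$. By construction, the switching sequence $\bm{\sigma}^{*}$ drives the initial state $0\in X_0$ to $k\in X_u$, establishing Item~\ref{ite:1nec}.

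For Item~\ref{ite:2nec}, for each node $v\in\V_{np}$ I would define $\B_v(i)=+1$ if there exists a path in $\G_{np}$ starting at $v$ labeled by the suffix $\sigma_i\sigma_{i+1}\ldots\sigma_{k-1}$, and $\B_v(i)=-1$ otherwise, for every $i=0,\ldots,k$; elsewhere I would set $\B_v(x)=0$. Condition~\eqref{eq:pcbf1} is immediate because the full bad sequence is not accepted from any node, giving $\B_v(0)=-1\le 0$; condition~\eqref{eq:pcbf2} follows because the empty suffix is vacuously accepted from every node, giving $\B_v(k)=+1>0$.

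The main obstacle is the verification of the edge condition~\eqref{eq:pcbf3}, which I would handle by case analysis on $x$ and $\sigma$. The only delicate case is when $x=i\in\{0,\ldots,k-1\}$, $\sigma=\sigma_i$ (so that $f_\sigma(x)=i+1$), and $(v,\sigma_i,v')\in\E_{np}$: if $\B_v(i)=+1$ the inequality is trivial since $\B_{v'}$ is at most $+1$, and if $\B_v(i)=-1$ I would argue by contraposition that $\B_{v'}(i+1)=-1$, because any path from $v'$ labeled $\sigma_{i+1}\ldots\sigma_{k-1}$ could be prepended with the edge $(v,\sigma_i,v')$ to produce a path from $v$ labeled $\sigma_i\ldots\sigma_{k-1}$, contradicting $\B_v(i)=-1$. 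All remaining cases reduce to the inequality $\B_{v'}(0)=-1\le\B_v(x)\in\{-1,0,+1\}$, which is automatic, so the construction is admissible.
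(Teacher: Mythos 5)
Your proof is correct as a proof of Theorem~\ref{thm:nec} as literally stated, but it takes a genuinely different and much more elementary route than the paper. You build a one-dimensional ``counter'' system whose only unsafe trajectory traces the rejected word $\bm{\sigma}^*$, and you take $\B_v$ to be the indicator (valued in $\{-1,0,+1\}$) of whether the remaining suffix of $\bm{\sigma}^*$ is readable from $v$; the edge condition~\eqref{eq:pcbf3} then reduces to the observation that readability of a suffix from $v'$ propagates backwards along any edge $(v,\sigma_i,v')$, and non-acceptance of the full word gives~\eqref{eq:pcbf1}. This is exactly the combinatorial skeleton of the paper's argument, but in Boolean form: the paper instead builds a \emph{linear} system in dimension $k+1$ (shift-type matrices $A_\sigma=2\hat A_\sigma$), takes \emph{quadratic} barriers $\B_v(x)=\sum_i p_v[i]x[i]^2-1$, and obtains the coefficients $p_v[i]=4^{a_s}$ from a topological ordering of an auxiliary acyclic graph $\hat\G_{np}$ (Lemma~\ref{lem:cons}, Propositions~\ref{prop:vectorineq} and~\ref{prop:acyclic}); your ``prepend the edge'' contraposition is the qualitative version of the paper's acyclicity/bounded-path-length argument. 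What each buys: your construction is shorter and works for any $\G_{np}$ with almost no computation, and nothing in Definitions~\ref{def:safety} or~\ref{def:pcbf} forbids discontinuous $f_\sigma$ or $\B_v$, so the theorem as stated is proved. The paper's construction, however, is a strictly stronger counterexample: it shows that non-path-completeness already produces false certificates within the class of linear systems and continuous (quadratic) barrier templates, i.e., precisely the setting in which the SOS/LMI machinery of Sections~\ref{sec:comparison} and~\ref{sec:cs} would actually search. Your discontinuous, hand-crafted $\B_v$ would never arise from such a search, so your version says less about the practical danger the theorem is meant to expose. If you want your argument to carry the same force, you would need to smooth both the dynamics and the barriers --- which is essentially what the paper's quadratic construction accomplishes. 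One minor point worth stating explicitly in a write-up: non-acceptance of $\bm{\sigma}^*$ is a statement about paths starting from \emph{every} node of $\G_{np}$, which is exactly what you need for $\B_v(0)=-1$ to hold for all $v$ simultaneously.
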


The proof of Theorem~\ref{thm:nec} can be established by suitably constructing a dt-SwS $S_{np}$ and a specification $\safe(X_0,X_u)$ by leveraging the structure of $\G_{np}$. In particular, since $\G_{np}$ is not path-complete, there exists a sequence $\bm{\sigma}_k = (\sigma_0\ldots\sigma_{k-1})$ of length $k$ that cannot be accepted by $\G_{np}$. Using this, we construct $S_{np}$ and sets $X_0, X_u$ such that $S_{np}$ violates $\safe(X_0,X_u)$ upon encountering the switching sequence $\bm{\sigma} = (\bm{\sigma}_k \sigma_{k} \sigma_{k+1} \ldots)$, where $\bm{\sigma}$ is the sequence obtained by concatenating $\bm{\sigma}_k$ with an arbitrary sequence $(\sigma_{k} \sigma_{k+1} \ldots)$. In the following, we first provide a step-wise approach to the construction, and then conclude the section by summarizing the final proof of Theorem~\ref{thm:nec}. 

\subsection{Construction of System and Specification}

\label{subsec:cons}
Similar to the approach presented in~\cite{jungers_characterization_2017} in the context of stability, we start our construction with a non path-complete graph $\G_{np}=(\V_{np},\E_{np})$ over $\Sigma$ and identify a sequence $\bm{\sigma}_k =(\sigma_0\ldots\sigma_{k-1})$ of length $k$ that cannot be accepted by $\G_{np}$. Having this, we construct a dt-SwS $S_{np}$ as in~\eqref{eq:sysdyn} with linear dynamics, i.e., $f_{\bm{\sigma}(t)} \in \{A_{\sigma_1}, \ldots, A_{\sigma_m}\}$ for all $t \in \N, \forall \sigma \in \Sigma$,  and a corresponding safety specification $\safe(X_0,X_u)$ as follows.  

\begin{definition}[dt-SwS Construction] \label{def:cons}
For a non path-complete graph $\G_{np}$, consider a finite sequence $\bm{\sigma}_k \in \Sigma^k$, of length $k \in \N_{\geq 1}$ that is not accepted by $\G_{np}$. Define binary matrices $\hat A_\sigma \in \R^{n \times n}$, where $n=k+1$, such that $\hat A_\sigma[i,j] = 1$ if and only if $\bm{\sigma}_k(j-1) = \sigma$ and $i = j+1$, and all other elements of $\hat A_\sigma$ are identically zero. Then, define $A_\sigma = 2\hat A_\sigma$, $\forall \sigma \in \Sigma$. The linear dt-SwS $S_{np}$ may be obtained as
\begin{equation} \label{eq:unsys}
S_{np} \coloneq \bfm{x}(t+1) = A_{\sigma(t)} \bfm x(t),
\end{equation}
where $\bfm x(t) \in \R^n$, and $\bm{\sigma}(t) \in \Sigma, \forall t \in \N$. Moreover, define safety specification as $\safe(X_0,X_u)$ with 
\begin{align}
X_0 \coloneq  & \{ x \in \R^n \mid \sum_{i=1}^{n} x[i]^2 \leq \frac{1}{4^{k}} \}, \label{eq:init} \\
X_u \coloneq  & \{x \in \R^n \mid \sum_{i=1}^n x[i]^2 \geq 1 \}. \label{eq:uns}
\end{align}
\end{definition}

Clearly, by the construction in Definition~\ref{def:cons}, statement (\ref{ite:1nec}) of Theorem~\ref{thm:nec} is true, since for the initial condition $x_0 = [\frac{1}{2^k} \  0 \ \ldots \ 0]^\top \in X_0$ and the extended switching sequence $\bm{\sigma} = (\bm{\sigma}_k \sigma_{k} \sigma_{k+1} \ldots)$, we have that $\xsig(k) = [0 \ 0 \ \ldots \ 1]^\top \in X_u$, and as a result, $\safe(X_0,X_u)$ is violated. We now state the following lemma which is later used for proving statement (\ref{ite:2nec}) of Theorem~\ref{thm:nec}.

\begin{lemma} \label{lem:cons}
Let $\bm{\sigma}_k$ be a finite sequence over $\Sigma^k$, $k \in \N_{\geq 1}$. Consider the corresponding linear dt-SwS $S_{np}$ as in~\eqref{eq:unsys} as well as the safety specification $\safe(X_0,X_u)$ as in~\eqref{eq:init}-\eqref{eq:uns}. Consider a product of matrices from $\Sigma$ obtained as $A_{\bm{\sigma'}_l} = A_{\sigma_{l-1}'} \cdots A_{\sigma_{0}'}$, corresponding to a finite switching sequence $\bm{\sigma'}_l$, $l \in \N_{\geq 1}$. Then, $A_{\bm{\sigma'}_l} \neq 0$ if and only if $\bm{\sigma'}_l \in \bm{\sigma}_k$, i.e., there exist sequences $\bm{\sigma_m}, \bm{\sigma_n}$ of appropriate lengths such that $\bm{\sigma}_k = (\bm{\sigma}_m \bm{\sigma}'_l \bm{\sigma}_n$).
\end{lemma}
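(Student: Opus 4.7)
The plan is to analyze the construction very concretely: each matrix $A_\sigma = 2\hat A_\sigma$ is (twice) a sparse shift-type matrix whose only possibly nonzero entries are of the form $A_\sigma[j+1, j] = 2$, present exactly when $\bm{\sigma}_k(j-1) = \sigma$. Thus, acting on a standard basis vector $e_j \in \R^{n}$ (with $n = k+1$), we have
\begin{equation*}
  A_\sigma e_j \;=\; \begin{cases} 2 e_{j+1}, & \text{if } 1 \leq j \leq k \text{ and } \bm{\sigma}_k(j-1) = \sigma, \\ 0, & \text{otherwise.} \end{cases}
\end{equation*}
This characterizes the action on the canonical basis, and since $\{e_1, \ldots, e_n\}$ spans $\R^n$, the product $A_{\bm{\sigma'}_l}$ is nonzero if and only if there exists some $j$ with $A_{\bm{\sigma'}_l} e_j \neq 0$.

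Next I would compute $A_{\bm{\sigma'}_l} e_j = A_{\sigma'_{l-1}} \cdots A_{\sigma'_0} e_j$ by a straightforward induction on $l$. Unrolling the shift action established in the first paragraph, one obtains
\begin{equation*}
  A_{\bm{\sigma'}_l} e_j \;=\; 2^l e_{j+l}
\end{equation*}
whenever $j + l - 1 \leq k$ and the positional match
\begin{equation*}
  \bm{\sigma}_k(j-1) = \sigma'_0, \quad \bm{\sigma}_k(j) = \sigma'_1, \quad \ldots, \quad \bm{\sigma}_k(j + l - 2) = \sigma'_{l-1}
\end{equation*}
holds; otherwise $A_{\bm{\sigma'}_l} e_j = 0$, since the first index at which the match fails will produce a zero that is propagated through the remainder of the product.

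With this computation in hand, both directions of the equivalence follow immediately. For the forward direction, if $A_{\bm{\sigma'}_l} \neq 0$, then some $A_{\bm{\sigma'}_l} e_j$ is nonzero, which forces $l \leq k - j + 1 \leq k$ and the existence of an index $j$ such that the substring of $\bm{\sigma}_k$ starting at position $j-1$ of length $l$ coincides with $\bm{\sigma'}_l$. Taking $\bm{\sigma}_m = (\sigma_0 \ldots \sigma_{j-2})$ and $\bm{\sigma}_n = (\sigma_{j+l-1} \ldots \sigma_{k-1})$ yields $\bm{\sigma}_k = (\bm{\sigma}_m \bm{\sigma'}_l \bm{\sigma}_n)$, i.e., $\bm{\sigma'}_l \in \bm{\sigma}_k$. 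For the converse, if $\bm{\sigma'}_l \in \bm{\sigma}_k$ with $\bm{\sigma}_k = (\bm{\sigma}_m \bm{\sigma'}_l \bm{\sigma}_n)$ and $|\bm{\sigma}_m| = j - 1$, then the induction gives $A_{\bm{\sigma'}_l} e_j = 2^l e_{j+l} \neq 0$, so $A_{\bm{\sigma'}_l} \neq 0$.

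The only mild subtlety, and the step I would be most careful with, is the bookkeeping of indices in the inductive step: because the matrices are applied in the order $A_{\sigma'_0}$ first, then $A_{\sigma'_1}$, and so on, the letter $\sigma'_i$ in the product must be matched against position $j-1+i$ of $\bm{\sigma}_k$, not the reverse. Keeping this alignment correct (and remembering that the shift runs off the state space once the index exceeds $k+1$, which is what forces $l \leq k$ when nonzero) is the one place where a careless off-by-one would break the argument.
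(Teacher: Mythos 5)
Your proof is correct and takes the same route the paper intends: the paper's own ``proof'' is a one-line assertion that the claim ``can be directly verified from the construction,'' and your argument is precisely that direct verification, carried out carefully. The shift-action computation $A_{\bm{\sigma'}_l} e_j = 2^l e_{j+l}$ under the positional match (and $0$ otherwise), together with the index bookkeeping you flag at the end, is exactly the content the paper leaves to the reader, and your handling of the offsets and of the length constraint $j+l-1 \leq k$ is accurate.
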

\begin{proof}
 This statement can be directly verified from the construction of $S_{np}$ via Definition~\ref{def:cons}.
\end{proof}

Now, to show the validity of Statement~\ref{ite:2nec} of Theorem~\ref{thm:nec}, we seek to construct a suitable graph-based barrier function $\big (\G_{np}, \{\B_v, v \in \V_{np}\} \big)$ for $S_{np}$ that satisfies conditions~\eqref{eq:pcbf1}-\eqref{eq:pcbf3}. For this purpose, we consider our barrier functions to be quadratic, so that for each $v \in \V_{np}$, we have $B_v : \R^n \rightarrow \R$ as 
\begin{equation} \label{eq:gbfnec}
\B_v(x) \coloneq \sum_{i=1}^n p_v[i]x[i]^2 - 1.
\end{equation}
Moreover, we let $p_v[i]$ be any real value satisfying
\begin{equation} ~\label{eq:sat}
1 < p_v[i] \leq 4^k,
\end{equation}
for all $v \in \V_{np}$ and $1 \leq i \leq n$. Note that condition~\eqref{eq:sat} is required for the satisfaction of~\eqref{eq:pcbf1}-\eqref{eq:pcbf2}, as we will see later. 

Now, it remains to show that the graph-based barrier function $\big (\G_{np}, \{\B_v, v \in \V_{np}\} \big)$ satisfies conditions~\eqref{eq:pcbf1}-\eqref{eq:pcbf3} in order to prove statement (\ref{ite:2nec}) of Theorem~\ref{thm:nec}. 

\subsection{Admissibility of Graph-Based Barrier Function}

In this section, we show that there exists a graph-based barrier function with the template~\eqref{eq:gbfnec} satisfying conditions~\eqref{eq:pcbf1}-\eqref{eq:pcbf3} for $S_{np}$ with respect to the corresponding safety specification $\safe(X_0,X_u)$ as in Definition~\ref{def:cons}. In particular, we first show that for $S_{np}$, condition~\eqref{eq:pcbf3} can be represented as simple vector inequalities over the coefficients $p_v$ in equation~\eqref{eq:gbfnec}. Then, we utilize some interesting properties of the graph $\G_{np}$ to find the coefficients $p_v$ that lead to the satisfaction of inequalities~\eqref{eq:pcbf1}-\eqref{eq:pcbf3}.

\begin{proposition} \label{prop:vectorineq}
Consider the dt-SwS $S_{np}$ as in Definition~\ref{def:cons} and the corresponding safety specification $\safe(X_0,X_u)$, and a non path-complete graph $\G_{np}$. The graph-based barrier function $\big (\G_{np}, \{\B_v, v \in \V_{np}\} \big)$ with $\B_v$ as in~\eqref{eq:gbfnec}-\eqref{eq:sat} satisfies conditions~\eqref{eq:pcbf1}-\eqref{eq:pcbf2}. Moreover, condition~\eqref{eq:pcbf3} is satisfied if and only if for all $(v,\sigma,v') \in \E_{np}$:
\begin{equation} \label{eq:vecineq}
4\hat{A}_\sigma^\top p_{v'} \leq p_v.
\end{equation}
\end{proposition}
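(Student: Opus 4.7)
The plan is to handle the three conditions in Definition~\ref{def:pcbf} separately. Conditions~\eqref{eq:pcbf1} and~\eqref{eq:pcbf2} will follow directly from the two bounds~\eqref{eq:sat} on the coefficients $p_v[i]$ combined with the explicit definitions of $X_0$ and $X_u$, while condition~\eqref{eq:pcbf3} will be rewritten as a coordinate-wise quadratic inequality in $x$ that turns out to be equivalent to~\eqref{eq:vecineq}.

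For~\eqref{eq:pcbf1}, I would bound $\B_v(x) = \sum_{i=1}^n p_v[i] x[i]^2 - 1 \leq (\max_i p_v[i]) \sum_i x[i]^2 - 1 \leq 4^k \cdot 4^{-k} - 1 = 0$ for every $x \in X_0$, where the upper bound $p_v[i] \leq 4^k$ from~\eqref{eq:sat} is combined with the definition~\eqref{eq:init} of $X_0$. For~\eqref{eq:pcbf2}, the strict lower bound $p_v[i] > 1$ from~\eqref{eq:sat} together with~\eqref{eq:uns} gives $\B_v(x) \geq (\min_i p_v[i]) \sum_i x[i]^2 - 1 > 1 \cdot 1 - 1 = 0$ for every $x \in X_u$.

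The heart of the argument is the equivalence of~\eqref{eq:pcbf3} and~\eqref{eq:vecineq}. Substituting $A_\sigma = 2\hat{A}_\sigma$, I would expand $\B_{v'}(A_\sigma x) = 4 \sum_i p_{v'}[i] (\hat{A}_\sigma x)[i]^2 - 1$. The key structural observation is that, by Definition~\ref{def:cons}, $\hat{A}_\sigma$ carries at most one nonzero entry per row, so $(\hat{A}_\sigma x)[i]$ equals $x[i-1]$ whenever $\bm{\sigma}_k(i-2) = \sigma$ and is zero otherwise. Hence $(\hat{A}_\sigma x)[i]^2$ is a pure square with no cross terms in the $x[j]$'s, and reindexing with $j = i-1$ gives $\sum_i p_{v'}[i] (\hat{A}_\sigma x)[i]^2 = \sum_j (\hat{A}_\sigma^\top p_{v'})[j]\, x[j]^2$. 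Condition~\eqref{eq:pcbf3} therefore reduces, after cancelling the constant $-1$ on both sides, to $\sum_{j=1}^n \bigl(p_v[j] - 4(\hat{A}_\sigma^\top p_{v'})[j]\bigr) x[j]^2 \geq 0$, required to hold for all $x \in \R^n$.

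It then remains to equate this pointwise quadratic inequality with the vector inequality $4\hat{A}_\sigma^\top p_{v'} \leq p_v$. The "if" direction is immediate since each coefficient is then non-negative and multiplied by $x[j]^2 \geq 0$. For the "only if" direction, I would test against the standard basis vectors $e_j \in \R^n$: evaluating at $x = e_j$ isolates the $j$-th coefficient and forces $p_v[j] - 4(\hat{A}_\sigma^\top p_{v'})[j] \geq 0$. The main subtlety is really the structural observation that $\hat{A}_\sigma$ produces a diagonal (square-only) quadratic form; without this, one would obtain a semidefinite condition on a quadratic matrix rather than an element-wise vector inequality, and the equivalence with~\eqref{eq:vecineq} would fail.
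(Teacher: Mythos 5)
Your proof is correct and follows essentially the same route as the paper: the same bounds from~\eqref{eq:sat} against the definitions of $X_0$ and $X_u$ for conditions~\eqref{eq:pcbf1}--\eqref{eq:pcbf2}, and for the equivalence with~\eqref{eq:vecineq} the same two ingredients, namely the expansion of $\B_{v'}(A_\sigma x)$ exploiting that $\hat A_\sigma$ has at most one nonzero entry per row (so the quadratic form stays diagonal) and the evaluation at canonical basis vectors. Your presentation merely unifies the two implications by first reducing~\eqref{eq:pcbf3} to a single diagonal quadratic inequality, which is a cosmetic rather than substantive difference.
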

\begin{proof}
We first show the satisfaction of condition~\eqref{eq:pcbf1}. For $x \in X_0$ as in~\eqref{eq:init}, we have from~\eqref{eq:sat} that for all $v \in \V_{np}$, $\B_v(x) = \sum_{i=1}^n p_v[i]x[i]^2 - 1 \leq 4^k \sum_{i=1}^n x[i]^2 - 1 \leq 0$. Similarly, one can show the satisfaction of condition~\eqref{eq:pcbf2}. Now we show that condition~\eqref{eq:pcbf3} is satisfied if and only if the vector inequality~\eqref{eq:vecineq} holds. To do so, we first show the direction~\eqref{eq:pcbf3} $\implies$~\eqref{eq:vecineq}. By applying condition~\eqref{eq:pcbf3} to the dt-SwS $S_{np}$, we have that for all $(v,\sigma,v') \in \E_{np}$ and all $x \in \R^n$, $\B_{v'}(A_\sigma x) \leq \B_v(x)$. For an arbitrary index $0 \leq i' \leq n$, consider the $i'^{\text{th}}$ row of the inequality~\eqref{eq:vecineq}. If the $i'^{\text{th}}$ row of $A_\sigma$ is the $0$ vector, then we get $0 \leq p_v[i']$, which is automatically satisfied due to~\eqref{eq:sat}. If not, take any index $0 \leq i,i' \leq n$ such that $\hat A_\sigma[i',i] =1$ (note that for any $i$, there exists at most one $i'$ such that this holds) and fix $x=e_{i}$, which is the $i^{\text{th}}$ canonical basis vector.  By applying condition~\eqref{eq:pcbf3}, we get
\begin{equation*}
   4 p_{v'}[i'] \leq p_v[i]. 
\end{equation*}
Combining the inequalities and putting it in a matrix form, we get~\eqref{eq:vecineq}.

To show that~\eqref{eq:vecineq} $\implies$~\eqref{eq:pcbf3}, write for all $v \in \V_{np}$, $P_v = \diag(p_v)$. Then, we have
\begin{align*}
\B_{v'}(A_\sigma x) & = ((A_\sigma x)^\top P_{v'}(A_\sigma x)) - 1 \\ &= x^\top(2\hat A_\sigma )^\top P_{v'}(2 \hat A_\sigma) x - 1\\ & = 4 x^\top \hat A_\sigma^\top (\sum_{i'} p_{v'}[i']e_{i'} e_{i'}^\top) \hat A_\sigma x - 1 \\
& = 4x^\top \big(\sum_{i'} p_v[i'] (\hat A_\sigma^{\top} e_{i'})(\hat A_\sigma e_{i'}^\top) \big) x - 1\\
& \leq x^\top \sum_{i} p_v[i]e_{i} e_{i}^\top x -1 \\
& = x^\top P_v x - 1 = B_v(x),
\end{align*}
where the inequality arises from condition~\eqref{eq:vecineq}.$\qedsymb$
\end{proof}

Proposition~\ref{prop:vectorineq} allows us to show that the graph-based barrier function constructed in Section~\ref{subsec:cons} satisfies conditions~\eqref{eq:pcbf1}-\eqref{eq:pcbf2}. As the last step in constructing the proof of statement~\ref{ite:2nec} of Theorem~\ref{thm:nec}, we show that for the non path-complete graph $\G_{np}$, the construction also satisfies~\eqref{eq:pcbf3}, which can be shown by proving that the vector inequality~\eqref{eq:vecineq} holds. To do so, we utilize an auxiliary graph $\hat \G_{np} = (\hat \V_{np}, \hat \E_{np})$, whose edges describe the element-wise inequalities of~\eqref{eq:vecineq}. The nodes of $\hat \G_{np}$ are defined as
\[ \hat \V_{np} \coloneq \{(v,i) \mid v \in \V_{np}, 1 \leq i \leq n\}, \]
i.e., they are pairs of the nodes of $\G_{np}$ and the $i^{\text{th}}$ dimension of the dt-SwS $S_{np}$, such that each node represents a particular coefficient of the graph-based barrier function $p_v[i]$. Now, we construct the edge in $\hat \E_{np}$ from $(v,i)$ to $(v',i')$ if and only if
\begin{enumerate}
\item There is an edge $(v,\sigma,v') \in \V$ for some $\sigma \in \Sigma$, and
\item The corresponding matrix $A_\sigma$ is such that $A_\sigma (i',i) \neq 0$. 
\end{enumerate}
Correspondingly, we label the edge $\sigma$, i.e., $\big((v,i),\sigma,(v',i')\big)$ $\in \hat \E_{np}$. The graph $\hat \G_{np}$ enjoys some interesting properties, as shown in the following proposition, which we will leverage to show the satisfaction of~\eqref{eq:vecineq}. 

\begin{proposition} \label{prop:acyclic}
Given a non path-complete graph $\G_{np}$ over the alphabet $\Sigma$, the corresponding auxiliary graph $\hat \G_{np}$ is acyclic with at most $k-1$ consecutive edges.  
\end{proposition}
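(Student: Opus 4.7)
The plan is to exploit the very rigid structure of the matrices $\hat{A}_\sigma$ built in Definition~\ref{def:cons}: each $\hat{A}_\sigma$ has a nonzero entry $\hat{A}_\sigma[i',i]=1$ precisely when $i' = i+1$ and $\bm{\sigma}_k(i-1) = \sigma$. Translating this into the edge set of $\hat{\G}_{np}$, an edge $\big((v,i),\sigma,(v',i')\big) \in \hat{\E}_{np}$ can exist only when $i' = i+1$ and $\sigma = \bm{\sigma}_k(i-1)$. This single structural fact will do most of the work.

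First I would dispense with acyclicity. Along every edge of $\hat{\G}_{np}$ the second coordinate strictly increases by $1$, so along any walk of positive length the second coordinate is strictly increasing. Consequently no walk can return to its starting node, and $\hat{\G}_{np}$ is acyclic.

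Next I would bound the length of any path. Suppose, for contradiction, that there is a walk in $\hat{\G}_{np}$ of length $k$,
\[
(v_0, i_0) \xrightarrow{\tau_0} (v_1, i_1) \xrightarrow{\tau_1} \cdots \xrightarrow{\tau_{k-1}} (v_k, i_k).
\]
By the coordinate-increase property, $i_j = i_0 + j$ for $0 \leq j \leq k$. Since each $i_j$ must lie in $\{1,\ldots,n\} = \{1,\ldots,k+1\}$, this forces $i_0 = 1$ and $i_k = k+1$. The edge condition then pins the labels to $\tau_j = \bm{\sigma}_k(i_j - 1) = \bm{\sigma}_k(j)$ for $0 \leq j \leq k-1$, so the sequence of labels read along the walk is exactly $\bm{\sigma}_k$. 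Finally, by the definition of $\hat{\E}_{np}$, each underlying edge $(v_j,\tau_j,v_{j+1})$ lies in $\E_{np}$, which means the finite sequence $\bm{\sigma}_k$ is accepted by $\G_{np}$ along the path $v_0 \to v_1 \to \cdots \to v_k$. This contradicts the defining assumption that $\bm{\sigma}_k$ is not accepted by $\G_{np}$, so no walk of length $k$ exists, and every walk has at most $k-1$ consecutive edges.

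The main obstacle, and the main technical content, is the second step: extracting from a hypothetical long walk in $\hat{\G}_{np}$ a genuine accepting path in $\G_{np}$ for the forbidden word $\bm{\sigma}_k$. Once the coordinate-shifting constraint on the edges of $\hat{\G}_{np}$ is made explicit, the translation is essentially forced, and acyclicity comes for free from the same observation.
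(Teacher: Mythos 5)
Your proof is correct and follows essentially the same route as the paper: both arguments rest on the observation that an edge of $\hat\G_{np}$ forces the index coordinate to increase by exactly one, which gives acyclicity, and that a length-$k$ walk would read off the forbidden word $\bm{\sigma}_k$ and hence produce an accepting path in $\G_{np}$. Your phrasing in terms of the strictly increasing second coordinate is a cleaner and more explicit packaging of what the paper does via the vanishing diagonal of the matrix product $A_{\bm{\sigma}_l}$ and Lemma~\ref{lem:cons}, but the underlying idea is identical.
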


\begin{proof}
A cycle in the graph $\hat \G_{np}$ corresponds to a path $\bm{\pi}_l = \big(\big((v,i),\sigma_0,(v_1,i_i)\big)\ldots  \big((v_{l-1},i_{l-1}),\sigma_{l-1},$ $(v,i)\big)\big)$. Such a cycle can exist only if the $[i,i]$ element of the matrix product $A_{\bm{\sigma}_l} = A_{\sigma_{l-1}}\cdots A_{\sigma_0}$ is non-zero, where $\bm{\sigma}_l = (\sigma_0\ldots\sigma_{l-1})$. Moreover, we have that $A_{\bm{\sigma}_l}[i,i] =  \sum_{j_{l-2},\ldots,j_0=1}^{n} A_{\sigma_{l-1}}[i,j_{l-2}] \cdots A_{\sigma_0}[j_0,i]$.
By the construction of $A_\sigma$ by Definition~\ref{def:cons}, we have that $A_\sigma[0,j] = 0$ for  any $1 \leq j \leq n$ and any $\sigma \in \Sigma$, so, $A_{\bm \sigma_l}[0,0] = 0$. For any other $i$, again from Definition~\ref{def:cons}, we have that for any $\sigma \in \Sigma$, $A_\sigma[i,j] = 0$ whenever $ i \neq j+1$. Therefore, for $A_{\bm{\sigma}_l}[i,i] \neq 0$, one needs $i = j_{k-2} +1 = j_{k-3} +2 \cdots = i + (k-1)$, which is impossible. Therefore, $A_{\bm{\sigma}_l}[i,i] = 0$ for all $1 \leq i \leq n$, and thus, $\hat \G_{np}$ is acyclic. 

Now, if $\hat \G_{np}$ consists of $k$ consecutive edges, then it corresponds to a path $\bm{\pi}_k = \big(\big((v,i),\sigma_0,(v_1,i_1)\big)\ldots $ $\big((v_{k-1},i_{k-1}),\sigma_{k-1}, (v',i')\big)$. Then, from Lemma~\ref{lem:cons}, the product $A_{\bm{\sigma}_k} \neq 0$ if and only if $\bm{\sigma}_k$ exists in $\hat \G_{np}$, and correspondingly $\bm{\sigma}_k$ is accepted by $\G_{np}$. But this is not possible since $\G_{np}$ is non path-complete and cannot accept $\bm{\sigma}_k$. Therefore, $\hat \G_{np}$ can have at most $k-1$ edges.  
$\qedsymb$
\end{proof}

We now have all the prerequisites to show the existence of an admissible graph-based barrier function $(\G_{np}, \{\B_v, v \in \V_{np} \})$ as defined in~\eqref{eq:gbfnec} such that it satisfies conditions~\eqref{eq:pcbf1}-\eqref{eq:pcbf3}, for the constructed dt-SwS $S_{np}$ and the given non path-complete graph $\G_{np}$. 

\begin{refproof}{\bf{of Theorem~\ref{thm:nec}.}}
Statement~\ref{ite:1nec} of Theorem~\ref{thm:nec} follows directly from the construction of the dt-SwS $S_{np}$ and $\safe(X_0,X_u)$ in Definition~\ref{def:cons}, as $S_{np}$ violates $\safe(X_0,X_u)$ in $k$ time steps with $\bfm{x_{x_0}, \bm{\sigma}}(k) \in X_u$ for the sequence $\bm{\sigma_k}$ that is not accepted by $\G_{np}$. So we focus on the proof of statement (\ref{ite:2nec}). Note that every directed acyclic graph has a topological ordering of its nodes~\cite[Chapter 1]{digraphs}. This implies that the nodes of $\hat \V_{\np}$ can be re-numbered, i.e., $(v,i) \to s$, $1 \leq s \leq |\V_{np}|$, such that there exists a path from $s$ to $s'$ only if $s' > s$, $1 \leq s, s' \leq |\V_{np}|$. Furthermore, for each node $(v,i) \in \hat \V_{np}$ numbered $s$, assign the coefficient of the graph-based barrier function as
\begin{equation*}
p_v(i) = 4^{a_{s}}.
\end{equation*}
We determine $a_{s}$ as follows. If there exists an edge $\big((v,i),\sigma,(v',i')\big) \in \hat \E_{np}$ for some $\sigma \in \Sigma$, and the nodes $(v,i),(v',i')$ are numbered $s, s'$ respectively, then $a_{s} = a_{s'} + 1$. Otherwise, for all nodes $(v,i) \in \E_{np}$ with no outgoing edge, $a_{s} = 4$. Then, since the graph has no more than $k-1$ consecutive edges, one has that for all $(v,i) \in \E_{np}$, $4 \leq p_v(i) \leq 4^k$, satisfying condition~\eqref{eq:sat}. Therefore, using Proposition~\ref{prop:vectorineq}, we can show the satisfaction of conditions~\eqref{eq:pcbf1}-\eqref{eq:pcbf2}. 

Now, we have to show that for every edge $(v,\sigma,v') \in \E_{np}$ of $\G_{np}$, condition~\eqref{eq:pcbf3} holds. By Proposition~\ref{prop:vectorineq}, we instead show the satisfaction of condition~\eqref{eq:vecineq}. Consider the $i^{\text{th}}$ component of inequality~\eqref{eq:vecineq}. This corresponds to the term $4\hat A_\sigma^\top p_{v'}[i]$. If this is zero, then~\eqref{eq:vecineq} automatically holds at the $i^{\text{th}}$ component. In the case that $\hat A_\sigma^\top [i,i'] = 1$ for some index $i'$, this corresponds to an edge $\big((v,i),\sigma,(v',i')\big) \in \hat \E_{np}$ of $\hat \G_{np}$. As a result, we have that $4p_{v'}[i'] = 4^{a_{s'} + 1} \leq p_v[i]$. Therefore, condition~\eqref{eq:vecineq} is satisfied at all components $1 \leq i \leq n$. Thus, the graph-based barrier function $(\G_{np}, \{\B_v, v \in \V_{np} \})$ is indeed admissible, and the proof is complete. $\qedsymb$
\end{refproof}

In the following, we illustrate a simple counterexample obtained via the construction described above. 

\begin{example} \label{ex:nec}
Consider the non path-complete graph $\G_{np}=(\V_{np}, \E_{np})$ over $\Sigma=\{1,2\}$ as shown in Figure~\ref{fig:nonpc}. The corresponding finite sequence of length $k=3$ that cannot be accepted by $\G_{np}$ is given by $\bm{\sigma}_k = (121)$. Using this, we first construct a suitable dt-SwS $S_{np}$. By Definition~\ref{def:cons}, we obtain $\hat A_1 = \begin{bmatrix}
    0 & 0 & 0 & 0 \\
    1 & 0 & 0 & 0 \\
    0 & 0 & 0 & 0 \\
    0 & 0 & 1 & 0 \\
\end{bmatrix}$ and $\hat A_2 =
\begin{bmatrix}
    0 & 0 & 0 & 0 \\
    0 & 0 & 0 & 0 \\
    0 & 1 & 0 & 0 \\
    0 & 0 & 0 & 0 \\
\end{bmatrix}
$. Consequently, we have $A_1 = 2\hat A_1$, $A_2 = 2\hat A_2$, and $S_{np}$ given by~\eqref{eq:unsys}, with state set $X = \R^n$, $X_0 =  \{ x \in \R^n \mid \sum_{i=1}^{n} x[i]^2 \leq \frac{1}{64} \}$, $X_u =  \{x \in \R^n \mid \sum_{i=1}^n x[i]^2 \geq 1 \}$, and safety specification $\safe(X_0,X_u)$.

Clearly, statement~\ref{ite:1nec} of Theorem~\ref{thm:nec} is correct, since for the initial condition $x_0 = [\frac{1}{64} \ 0 \ 0 \ 0]^\top$ and switching sequence $\bm{\sigma} = (121\ldots)$, we have $\bfm{x}_{x_0, \bm{\sigma}}(3) = [ 0 \ 0 \ 0 \ 1] \in X_u$. Now, consider the graph-based barrier function $(\G_{np}, \{\B_v: v \in \{v_1,v_2\} \})$ of $\G_{np}$, where $\B_{v_1}(x) = 64x[1]^2 + 16 x[2]^2 
 + 16x[3]^2 + 4x[4] - 1$, and $\B_{v_2}(x) = 4x[1]^2 + 64 x[2]^2 + 4x[3]^2 + 4x[4]^2 - 1$. Note that these functions are obtained by utilizing the topological ordering of auxiliary graph $\hat \G_{np}$ corresponding to $\G_{np}$, shown in Figure~\ref{fig:Ghat}.
Indeed, the set of functions are admissible for $S_{np}$ and $\G_{np}$, since conditions~\eqref{eq:sat} and~\eqref{eq:vecineq} are satisfied. Therefore, $(\G_{np}, \{\B_v: v \in \{v_1,v_2\} \})$ cannot provide a valid set of safety certificates when $\G_{np}$ is not path-complete. 
\end{example}

\begin{figure}
  \centering
  \includegraphics[width=0.5\columnwidth]{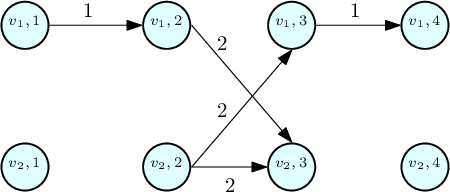}
  \caption{$\hat \G_{np}$ corresponding to graph $\G_{np}$ for Example~\ref{ex:nec}}
  \label{fig:Ghat}
\end{figure}

\section{Comparison of Path Complete Barrier Functions}
\label{sec:comparison}

In the previous section, we showed that graph-based barrier functions do not provide valid safety certificates unless they are path-complete. In general, given an alphabet $\Sigma$, the choice of a suitable path-complete graph for defining the safety conditions is not unique. As a result, the path-complete framework provides two degrees of freedom for the computation of suitable PCBFs. The first one is through the algebraic component, i.e., the template of barrier functions. For instance, one may change the template of the PCBF (e.g., from quadratic functions to general polynomial functions~\cite{prajna_worst_case,pushpak_cegis}, or even neural network parameterizations~\cite{zhao_learning_2021,anand_nn}) to improve conservatism of the safety certificates and making the safety verification problem less restrictive. Secondly, one can leverage the flexibility offered by the combinatorial component, i.e., the path-complete graph. While it was shown in~\cite{anand_pcbf} that path-complete barrier functions are generally less conservative than common and multiple barrier function frameworks~\cite{cbf_switch2,cbf_switch1,prajna_safety_2004}, it is not clear how the choice of different path-complete graphs may affect the conservatism.  In this section, we aim to address this problem by leveraging techniques from graph theory to compare two path-complete graphs and order them according to the conservatism offered by their corresponding safety conditions. We would like to note that the results here have been inspired from~\cite{geom_pclf,pclf_ordering1} which present approaches for systematically comparing the conservatism of Lyapunov functions and joint spectral radius for switched systems via path-complete Lyapunov functions.

Consider a dt-SwS $S$ defined as in~\eqref{eq:sysdyn}. For $S$,
suppose that $\G=(\V, \E)$ is a path-complete graph over $\Sigma$, and $(\G,\{\B_v, v \in \V \})$ is the corresponding PCBF. By a slight abuse of notation, we say that the PCBF $\B = (\G,\{\B_v, v \in \V \})$ belongs to the template $\mathcal{B}$, denoted by $\B \in \mathcal{B}$, if for all $v \in \V$, $\B_v \in \mathcal{B}$. For instance, when $\mathcal{B}$ is the set of quadratic functions, we want $\B_v$ to be a quadratic function for each $v$. Then, we define the ordering relation between any two path-complete graphs over $\Sigma$ as follows. 

\begin{definition}[Ordering of PCBFs]
\label{def:order}
Consider two path-complete graphs $\G$ and $\bar \G$. We say that $\G \preceq \bar \G$ if for any dt-SwS $S$ over $\Sigma$ as in~\eqref{eq:sysdyn}, any safety specification $\safe(X_0,X_u)$, and any template $\mathcal{B}$,  the existence of a suitable PCBF $\B = (\G, \{\B_v, v \in \V\}) \in \mathcal{B}$ implies the existence of a PCBF $\bar \B =(\bar \G, \{\bar \B_{\bar v}, \bar v \in \bar \V \}) \in \mathcal{B}$. Moreover, if the relation holds only for a particular template $\mathcal{B}$, we write $\G \preceq_{\mathcal{B}} \bar \G$. 
\end{definition}

One observes that the ordering relation $\G \preceq \bar \G$ means that the graph $\bar \G$ leads to potentially less conservative PCBFs in comparison with $\G$, since one may be able to find a PCBF corresponding to $\bar \G$ even when those corresponding to $\G$ do not exist. In the following, we focus on determining the ordering relation between two path-complete graphs $\G$ and $\bar \G$, by utilizing the notion of simulation relations, defined below. 

\begin{definition}[Simulation Relation]
\label{def:simu}
Consider two path-complete graphs $\G=(\V, \E)$ and $\bar \G=(\bar \V, \bar \E)$ defined over the same alphabet $\Sigma$. We say that $\G$ simulates $\bar \G$ if there exists a function $R: \bar \V \rightarrow \V$ such that
\begin{equation} \label{eq:simu}
(\bar v, \sigma, \bar v') \in \bar \E \implies \big(R(\bar v), \sigma, R(\bar v')\big) \in \E.
\end{equation}
\end{definition}
As an example, consider the path-complete graph $\bar \G$ in Figure~\ref{fig:pc_simu} and the path-complete graph $\G$ as shown in Figure~\ref{fig:pcgraph}. It is evident that $\G$ simulates $\bar \G$ with the relation $R(\bar v_1) = R(\bar v_3) = v_1$, and $R(\bar v_2) = v_2$. Now, we show the main result of this section, i.e., we prove that a simulation relation between graphs is both necessary and sufficient for determining the ordering relation between two path-complete graphs independent of the class of the systems and the considered template of the barrier functions.

\begin{figure}
    \centering
     \vspace{0.2em}
    \includegraphics[width=0.4\columnwidth]{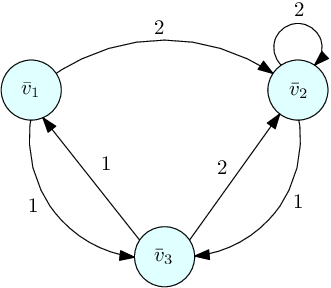}
    \caption{Path-complete graph $\bar \G$ that is simulated by the graph $\G$ in Figure~\ref{fig:pcgraph}.}
    \label{fig:pc_simu}
\end{figure}

\begin{theorem}[Ordering of PCBFs via Simulation]
\label{thm:simu}
Consider two path-complete graphs $\G = (\V, \E)$ and $\bar \G=(\bar \V, \bar \E)$ over the same alphabet $\Sigma$. Then the following statements are equivalent:
\begin{enumerate}
\item $\G$ simulates $\bar \G$ \label{ite:simu}
\item $\G \preceq \bar \G$ in the sense of Definition~\ref{def:order} \label{ite:ord}
\end{enumerate}
\end{theorem}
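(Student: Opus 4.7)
The plan is to prove the two implications separately.

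For (\ref{ite:simu}) $\Rightarrow$ (\ref{ite:ord}), I would use a pullback construction. Given a simulation $R: \bar \V \to \V$ and any admissible PCBF $\B = (\G, \{\B_v, v \in \V\}) \in \mathcal{B}$ for some dt-SwS $S$ and $\safe(X_0, X_u)$, define $\bar \B_{\bar v} := \B_{R(\bar v)}$ for every $\bar v \in \bar \V$. Conditions~\eqref{eq:pcbf1} and~\eqref{eq:pcbf2} transfer immediately, as each $\bar \B_{\bar v}$ is literally one of the $\B_v$. For~\eqref{eq:pcbf3}, any edge $(\bar v, \sigma, \bar v') \in \bar \E$ yields $(R(\bar v), \sigma, R(\bar v')) \in \E$ by~\eqref{eq:simu}, so $\bar \B_{\bar v'}(f_\sigma(x)) = \B_{R(\bar v')}(f_\sigma(x)) \leq \B_{R(\bar v)}(x) = \bar \B_{\bar v}(x)$ for all $x \in X$. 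Template membership is automatic since each $\bar \B_{\bar v}$ is a copy of a function already in $\mathcal{B}$.

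For (\ref{ite:ord}) $\Rightarrow$ (\ref{ite:simu}), I would proceed by contrapositive. Assume $\G$ does not simulate $\bar \G$, i.e., for every $R: \bar \V \to \V$ there exists an edge $(\bar v, \sigma, \bar v') \in \bar \E$ with $(R(\bar v), \sigma, R(\bar v')) \notin \E$. The task is to build a dt-SwS $S^{\star}$, a specification $\safe(X_0, X_u)$, and a template $\mathcal{B}^{\star}$ for which a PCBF on $\G$ exists but none on $\bar \G$ does. In the spirit of Section~\ref{sec:nec}, I would encode the edge structure of $\G$ into $S^{\star}$ by assigning one state coordinate to each vertex of $\V$, i.e., $\bfm x \in \R^{|\V|}$, and defining linear dynamics $A_\sigma$ so that the coordinate indexed by $v'$ after a $\sigma$-step aggregates (with a fixed contraction factor) exactly the coordinates $v$ for which $(v, \sigma, v') \in \E$. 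Choosing $X_0$ and $X_u$ as appropriately sized coordinate-wise sets (analogous to~\eqref{eq:init}-\eqref{eq:uns}) and taking $\mathcal{B}^{\star}$ to be a diagonal-quadratic template, a canonical per-vertex barrier function of the form $\B_v(x) = c\, x[v]^2 - 1$ should constitute a valid PCBF for $\G$, via an argument analogous to Proposition~\ref{prop:vectorineq}.

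The crux is to show that no PCBF for $\bar \G$ in $\mathcal{B}^{\star}$ exists. I would prove that any admissible $(\bar \G, \{\bar \B_{\bar v}\}) \in \mathcal{B}^{\star}$ implicitly defines a function $R: \bar \V \to \V$ -- namely, the ``dominant coordinate'' of each $\bar \B_{\bar v}$ pinned down by the boundary conditions~\eqref{eq:pcbf1}-\eqref{eq:pcbf2} -- and that the decrease inequalities~\eqref{eq:pcbf3} along every edge of $\bar \E$ then force this $R$ to satisfy~\eqref{eq:simu}. Since no such $R$ exists by assumption, this yields the desired contradiction. The \textbf{main obstacle} is precisely this extraction of $R$: the contraction rates and template must be tight enough so that each $\bar \B_{\bar v}$ is algebraically forced to align with a single coordinate, and so that the decrease condition along each $\bar \E$-edge demands the presence of the corresponding $\E$-edge. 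This step parallels the role of Propositions~\ref{prop:vectorineq} and~\ref{prop:acyclic} in Section~\ref{sec:nec}, and is where the bulk of the technical effort will lie.
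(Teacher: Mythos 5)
Your first implication, (\ref{ite:simu}) $\Rightarrow$ (\ref{ite:ord}), is correct and is exactly the paper's argument: pull back the barrier functions along $R$ and check~\eqref{eq:pcbf1}--\eqref{eq:pcbf3} edge by edge.

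The converse direction has a genuine gap, and it is precisely the step you flag as the ``main obstacle'': extracting a map $R:\bar\V\to\V$ from an arbitrary admissible PCBF for $\bar\G$ in a diagonal-quadratic template. With a template that large, nothing in conditions~\eqref{eq:pcbf1}--\eqref{eq:pcbf3} forces each $\bar\B_{\bar v}$ to concentrate on a single ``dominant coordinate''; the decrease inequalities only constrain sums of coefficients, so $\bar\G$ may well admit a PCBF built from mixtures of coordinates even when no simulation exists, and your intended contradiction never materializes. The paper sidesteps this entirely by exploiting the quantifier over templates in Definition~\ref{def:order}: it chooses the template $\mathcal{B}$ to be the \emph{finite set} $\{\B_v, v\in\V\}$ of functions it has just constructed for $\G$. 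Then any PCBF for $\bar\G$ in $\mathcal{B}$ is by definition an assignment $\bar v\mapsto \B_{R(\bar v)}$ for some $R:\bar\V\to\V$ --- no extraction argument is needed. What remains is a separating construction (the paper's Lemma~\ref{lem:add}): for any graph $\G$ one builds a linear system (one $2\times 2$ block per \emph{non}-edge of $\G$, not one coordinate per vertex) and diagonal-quadratic functions $\B_v$ such that $\B_{v'}(f_\sigma(x))\le\B_v(x)$ holds for all $x$ exactly when $(v,\sigma,v')\in\E$, and fails for some $x$ whenever $(v,\sigma,v')\notin\E$. Combining this with the decrease inequalities along $\bar\E$ forces $(R(\bar v),\sigma,R(\bar v'))\in\E$ for every $(\bar v,\sigma,\bar v')\in\bar\E$, i.e., $R$ is a simulation. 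To repair your proof you would need either to prove your dominant-coordinate claim (which I do not believe holds for the template you chose) or to adopt the finite-template trick together with a lemma of the above separating form.
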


The proof of Theorem~\ref{thm:simu} is more involved. In fact, it relies on showing that for any given graph $\G = (\V, \E)$, adding another edge not already in the graph may lead to the violation of inequalities~\eqref{eq:pcbf1}-\eqref{eq:pcbf3}. This is formally stated as follows. 

\begin{lemma}
\label{lem:add}
For any graph $\G=(\V,\E)$ over an alphabet $\Sigma$, there exists a dt-SwS $S$ over $\Sigma$ as in~\eqref{eq:sysdyn}, a safety specification $\safe(X_0,X_u)$, and a suitable graph-based barrier function $(\G, \{\B_v, v \in \V\})$ such that the following hold:
\begin{alignat}{3}
& \forall (v,\sigma,v') \in \E, \forall x \in X: \quad & \B_{v'}(f_\sigma(x)) \leq \B_v(x), \label{eq:add1}  \\
& \forall (v,\sigma,v') \notin \E, \exists x \in X: \quad & \B_{v'}(f_\sigma(x)) > \B_v(x). \label{eq:add2}
\end{alignat}
\end{lemma}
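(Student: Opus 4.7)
The plan is to prove the lemma by an explicit construction in which each vertex of $\G$ is associated with its own coordinate, and the switched dynamics $f_\sigma$ are designed so that the $v'$-th coordinate of $f_\sigma(x)$ depends only on the coordinates indexed by the predecessors of $v'$ under label $\sigma$. With the barrier functions taken as coordinate projections, the existence or absence of an edge $(v,\sigma,v') \in \E$ translates directly into whether $\B_{v'}(f_\sigma(x)) \leq \B_v(x)$ holds uniformly in $x$, or is violated for some $x$.

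Concretely, I would take $X = \R^{|\V|}$, write $x = (x_v)_{v \in \V}$, set $\B_v(x) := x_v$, and for each $\sigma \in \Sigma$ and $v' \in \V$ define
\[
f_\sigma(x)_{v'} := \begin{cases} \displaystyle\min_{v \in V_{\sigma,v'}} x_v & \text{if } V_{\sigma,v'} \neq \emptyset, \\[3pt] \displaystyle\max_{u \in \V} x_u + 1 & \text{otherwise,} \end{cases}
\]
where $V_{\sigma,v'} := \{v \in \V : (v,\sigma,v') \in \E\}$. The safety specification is chosen as $X_0 := \{0\}$ and $X_u := \{\mathbf{1}\}$ (the all-ones vector in $\R^{|\V|}$), so that conditions~\eqref{eq:pcbf1} and~\eqref{eq:pcbf2} hold trivially, namely $\B_v(0) = 0 \leq 0$ and $\B_v(\mathbf{1}) = 1 > 0$ for every $v \in \V$.

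The verification then splits naturally by cases. For~\eqref{eq:add1}, if $(v,\sigma,v') \in \E$ then $v \in V_{\sigma,v'}$, so $\B_{v'}(f_\sigma(x)) = \min_{w \in V_{\sigma,v'}} x_w \leq x_v = \B_v(x)$ for every $x \in \R^{|\V|}$. For~\eqref{eq:add2}, assume $(v,\sigma,v') \notin \E$. If $V_{\sigma,v'} = \emptyset$, then $\B_{v'}(f_\sigma(x)) = \max_u x_u + 1 > x_v = \B_v(x)$ for any $x$, e.g., $x = 0$. If instead $V_{\sigma,v'} \neq \emptyset$, then $v \notin V_{\sigma,v'}$, and choosing $x$ with $x_v = 0$ and $x_u = 1$ for all $u \neq v$ gives $\B_{v'}(f_\sigma(x)) = \min_{w \in V_{\sigma,v'}} x_w = 1 > 0 = \B_v(x)$, as required.

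The main obstacle is to find a single construction that simultaneously enforces the inequality $\B_{v'}(f_\sigma(\cdot)) \leq \B_v(\cdot)$ \emph{uniformly in} $x$ for every edge of $\E$ while \emph{admitting a witness} violating it for every non-edge. The min-over-predecessors trick accomplishes both in one stroke: membership of $v$ in $V_{\sigma,v'}$ yields a uniform upper bound on $f_\sigma(x)_{v'}$ by $x_v$, while exclusion of $v$ from a non-empty $V_{\sigma,v'}$ leaves room for a witness state. The auxiliary $\max_u x_u + 1$ branch handles the degenerate case in which no predecessor exists, producing a uniform violation regardless of $x$. No deeper combinatorial argument appears necessary, but one must check carefully that the singleton choice of $X_0$ and $X_u$ is compatible with~\eqref{eq:pcbf1}--\eqref{eq:pcbf2} for all $v$ simultaneously, which is why the coordinate-projection template is chosen.
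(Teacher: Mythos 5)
Your construction is correct: the coordinate-projection barriers $\B_v(x)=x_v$ together with the min-over-predecessors dynamics satisfy~\eqref{eq:pcbf1}--\eqref{eq:pcbf2} (with your singleton $X_0$, $X_u$), give the uniform inequality~\eqref{eq:add1} whenever $v\in V_{\sigma,v'}$, and yield an explicit witness for~\eqref{eq:add2} in both the empty- and nonempty-predecessor cases. However, it is a genuinely different route from the paper's. The paper builds a \emph{linear} system of dimension $2|\tilde\E|$ (one $2\times 2$ nilpotent block per missing edge $e\in\tilde\E=(\V\times\Sigma\times\V)\setminus\E$), uses \emph{quadratic} diagonal barrier functions, reduces the edge inequality to a componentwise comparison of $2$-vectors, and then assigns coefficients in $\{\tfrac13,\tfrac12\}$ through an eight-case analysis according to whether $e$ is a self-loop, outgoing, incoming, or unrelated edge at $v$ and $v'$. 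Your version replaces all of this with a single piecewise-linear (min/max) system of dimension $|\V|$ and linear barriers, so that edge membership immediately yields the uniform bound and non-membership immediately yields a witness; it is shorter, lower-dimensional, and avoids the case enumeration entirely. What the paper's heavier construction buys is that the separating example lives in the most standard class (linear dynamics, quadratic barriers), which would matter if one wanted the template-restricted ordering $\preceq_{\mathcal{B}}$ for quadratics or a statement confined to linear switched systems; for the lemma as stated, and for its use in the converse direction of Theorem~\ref{thm:simu} (where the template $\mathcal{B}$ is simply taken to be the finite set $\{\B_v, v\in\V\}$), your construction suffices. One cosmetic point: you should state $X_0\cap X_u=\emptyset$ explicitly (your singletons satisfy this), and you may as well take $X_0$ and $X_u$ to be small closed balls around $0$ and $\mathbf{1}$ if you prefer nondegenerate sets; nothing in the argument changes.
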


\begin{proof}
The proof is by construction, i.e., for a given graph $\G = (\V, \E)$, we build a suitable switched system $S$ and a safety specification $\safe(X_0,X_u)$, and correspondingly also construct a graph-based barrier function $\{\G, \{\B_v, v \in \V \}\}$ that satisfies conditions~\eqref{eq:add1}-\eqref{eq:add2}. Consider $\tilde \E$ to consist of all edges not in $\E$, i.e., $\tilde \E = (\V \times \Sigma \times \V) \setminus \E$. For notational clarity throughout the proof, we distinguish edges in $\bar \E$ from those in $\E$ by using the symbol $e$ to denote the edges in $\bar \E$. Moreover, for each edge $e = (v, \sigma,v') \in \tilde \E$, we assign an index $1 \leq \tilde e \leq |\tilde \E|$. Now, we build a system $S$ of dimension $n$ and state set $X= \R^n$, where $n = 2|\tilde \E|$, as follows. For each $\sigma \in \Sigma$, consider $f_\sigma(x) = A_\sigma x$, where $A_\sigma$ is defined block-wise, with $|\tilde \E|$ blocks of $2 \times 2$ entries on the diagonal, and zero everywhere else. Then, for each $\tilde e \in \{1,\ldots,|\E|\}$, we set the $\tilde{e}^{\text{th}}$ diagonal block of $A_\sigma$ as $A_\sigma[\tilde e] = \begin{bmatrix} 
0 & 0 \\
1 & 0
\end{bmatrix}
$. Then, we consider the safety specification $\safe(X_0,X_u)$ with $X_0 \coloneq \{x \in \R^n \mid \sum_{i=1}^n x_i^2 \leq 1\}$, and $X_u = \{x \in \R^n \mid \sum_{i=1}^n x_i^2 \geq 3\}$. 

Now, we build an appropriate graph-based barrier function $\{\G, \{\B_v, v \in \V \}\}$ corresponding to $S$, $\safe(X_0,X_u)$, and $\G$. Suppose that $\B_v(x) = \sum_{i=1}^n q_v(i)x_i^2 - 1$, where $\frac{1}{3} \leq q_v(i) \leq 1$.  One can readily observe that conditions~\eqref{eq:pcbf1}-\eqref{eq:pcbf2} are satisfied, making $\{\G, \{\B_v, v \in \V \}\}$ a suitable graph-based barrier function as long as~\eqref{eq:pcbf3}, and correspondingly,~\eqref{eq:add1} is satisfied. Now, we need to assign coefficients $q_v$ that satisfy conditions~\eqref{eq:add1}-\eqref{eq:add2}. To do this, let $Q_v = \diag(q_v)$, then we can write $\B_v(x) = x^\top Q_v x$. The matrix $Q_v$ is then defined to be block-wise such that each diagonal block $Q_v[\tilde e]$ consists of coefficients $q_v[2\tilde e-1]$ and $q_v[2 \tilde e]$. With some abuse of notation, we rewrite each block diagonal matrix $Q_v[\tilde e]$ as a vector $Q_v[\tilde e] = \big[q_v[2\tilde e-1] \ q_v[2 \tilde e]\big]^\top$. Remark that for any $(v, \sigma, v') \in \E \cup \tilde \E$, the following holds:
\begin{align}
& \B_{v'}(A_\sigma x) \leq \B_v(x),\  \forall x \in \R^n, \iff 
 \forall e=(\tilde v, \sigma, \tilde v') \in \tilde \E: Q_{v'}^\top[\tilde e]\begin{bmatrix} 0 & 0 \\ 1 & 0 \end{bmatrix} \leq Q_{v}^\top[\tilde e]. \label{eq:equiv2}
\end{align}
The above relation is due to the construction of the dt-SwS $S$. Inequality~\eqref{eq:equiv2} means that the second element of $Q_{\tilde v'}[\tilde e]$ must be less than or equal to the first element of $Q_{\tilde v}[\tilde e]$. On the other hand, the second entry of $Q_v[\tilde e]$ can be arbitrary as long as it takes a value between $\frac{1}{3}$ and $1$, since the corresponding entry on the left hand-side is zero, and thus the lower part of inequality~\eqref{eq:equiv2} is automatically satisfied. 

Now, we are ready to formally obtain the elements $Q_v[\tilde e]$, for all $1 \leq \tilde e \leq |\tilde \E|$, which consequently define the coefficients of $\B_v(x)$. For each $v \in \V$ of $\G$, we decide the coefficients based on the indices $1 \leq \tilde e \leq |\tilde \E|$ and its corresponding edges $e = (\tilde v, \sigma, \tilde v') \in \tilde \E$, using the following cases:
\begin{enumerate}
\item $v = \tilde v = \tilde v'$ ($e$ is a self-loop from $v$): $Q_v[\tilde e] = [\frac{1}{3} \ \frac{1}{2}]^\top$, \label{case:self}
\item $v =\tilde v$ and $\tilde v \neq \tilde v'$ ($e$ is an outgoing edge from $v$): $Q_v[\tilde e] = [\frac{1}{3} \ \frac{1}{3}]^\top$, \label{case:out}
\item $v = \tilde v'$ and $\tilde v \neq \tilde v'$ ($e$ is an incoming edge to $v$):  $Q_v[\tilde e] = [\frac{1}{2} \ \frac{1}{2}]^\top$, \label{case:in}
\item otherwise:  $Q_v[\tilde e] = [\frac{1}{2} \ \frac{1}{3}]^\top$. \label{case:other}
\end{enumerate}

Now, we proceed by showing that the construction satisfies condition~\eqref{eq:add1} for any edge $(v, \sigma, v') \in \E$, by equivalently checking condition~\eqref{eq:equiv2} for all $e = (\tilde v, \sigma, \tilde v') \in \tilde \E$. The following scenarios are considered:
\begin{itemize}
\item Case 1: $v = v'$. We clearly cannot consider $v = \tilde v = \tilde v'$ since we get $e \in \E$, leading to a contradiction. In all other possible scenarios, the second term of $Q_{v}[\tilde e]$ is at least as small as the first term of $Q_v[\tilde e]$, so~\eqref{eq:add1} is satisfied.
\item Case 2: $v \neq v'$. This case can be further divided into the following scenarios, out of which the possible ones are summarized in Table~\ref{tab:scen}.
\begin{enumerate}[label={(2.\arabic*):}]
\item  $v = \tilde v \neq v' = \tilde v'$. Then, once again, we have $e \in \E$, so we can rule out this scenario. 
\item $v = \tilde v' \neq v' = \tilde v$. Then, $e$ is an incoming edge for the node $v$ and outgoing for the node $v'$. As a result, $Q_v[\tilde e] = [\frac{1}{2} \ \frac{1}{2}]^\top$, and $Q_{v'} = [\frac{1}{3} \ \frac{1}{3}]^\top$. 
\item $v = \tilde v' \neq \tilde v \neq v'$. Then, $e$ is an incoming edge to $v$ and for $\tilde v'$ it is neither. So $Q_v[\tilde e] = [\frac{1}{2} \ \frac{1}{2}]^\top$, and $Q_{v'}[\tilde e] = [\frac{1}{2} \ \frac{1}{3}]^\top$.
\item $v = \tilde v' = \tilde v \neq  v'$. Then, $e$ is a self-loop for $v$ and for $\tilde v'$ it is neither. So $Q_v[\tilde e] = [\frac{1}{3} \ \frac{1}{2}]^\top$, and $Q_{v'}[\tilde e] = [\frac{1}{2} \ \frac{1}{3}]^\top$.
\item $v = \tilde v \neq \tilde v' \neq v'$. Then, $e$ is an outgoing edge for $v$, and neither an incoming nor an outgoing one for $v'$. Therefore, 
$Q_v[\tilde e] = [\frac{1}{3} \ \frac{1}{3}]^\top$ and $Q_{v'}[\tilde e]= [\frac{1}{2} \ \frac{1}{3}]^\top$. 
\item $v \neq \tilde v \neq v' = \tilde v'$. Then $e$ is outgoing for $v'$ and neither for $v$. In this case, $Q_v[\tilde e] = [\frac{1}{2} \ \frac{1}{3}]^\top$ and $Q_{v'}[\tilde e] = [\frac{1}{3} \ \frac{1}{3}]^\top$.
\item $v \neq \tilde v = v' = \tilde v'$. Then $e$ is a self-loop for $v'$ and neither for $v$. In this case, $Q_v[\tilde e] = [\frac{1}{2} \ \frac{1}{3}]^\top$ and $Q_{v'}[\tilde e] = [\frac{1}{3} \ \frac{1}{2}]^\top$.

\item $v \neq \tilde v \neq v' = \tilde v'$. Then $e$ is incoming for $v'$ and neither for $v$. We have $Q_v[\tilde e] = [\frac{1}{2} \ \frac{1}{3}]^\top$, and $Q_{v'}[\tilde e] =  [\frac{1}{2} \ \frac{1}{2}]^\top$.
\item $v \neq \tilde v \neq v' \neq \tilde v'$. Then $e$ is neither an incoming nor an outgoing edge for $v, v'$. Then, we have $Q_v[\tilde e] = Q_{v'}[\tilde e] = [\frac{1}{2} \ \frac{1}{3}]^\top$.
\end{enumerate}
Clearly, in all of these cases, the second element of $Q_{v'}[\tilde e]$ is at least as small as the first element of $Q_v[\tilde e]$. Therefore, condition~\eqref{eq:equiv2}, and correspondingy condition~\eqref{eq:add1} follows.
\end{itemize}

\begin{table}[hbt!]
\centering
\caption{The eight possible scenarios considered when checking for condition~\eqref{eq:equiv2}.}
\begin{tabular}{|c|c|c|}
\hline
     $e = (\tilde v, \sigma, \tilde v') \in \tilde \E$ & $Q_v[\tilde e]$ & $Q_{v'}[\tilde e]$ \\
\hline
 $v = \tilde v' \neq v' = \tilde v $ & $[\frac{1}{2} \ \frac{1}{2}]^\top$ & $[\frac{1}{3} \ \frac{1}{3}]^\top$ \\
 
$v = \tilde v' \neq \tilde v \neq v'$ & $ [\frac{1}{2} \ \frac{1}{2}]^\top$ & $[\frac{1}{2} \ \frac{1}{3}]^\top$ \\

$v = \tilde v' \neq \tilde v \neq v'$ & $ [\frac{1}{3} \ \frac{1}{2}]^\top$ & $[\frac{1}{2} \ \frac{1}{3}]^\top$ \\

$v = \tilde v  \neq \tilde v' \neq v'$ & $ [\frac{1}{3} \ \frac{1}{3}]^\top$ & $ [\frac{1}{2} \ \frac{1}{3}]^\top$ \\

$v \neq \tilde v \neq v' = \tilde v'$ & $[\frac{1}{2} \ \frac{1}{3}]^\top$ & $[\frac{1}{3} \ \frac{1}{3}]^\top$ \\

$v \neq \tilde v = v' = \tilde v'$ & $[\frac{1}{2} \ \frac{1}{3}]^\top$ & $[\frac{1}{3} \ \frac{1}{3}]^\top$ \\

$v \neq \tilde v \neq v' = \tilde v'$ & $[\frac{1}{2} \ \frac{1}{3}]^\top$ & $[\frac{1}{2} \ \frac{1}{2}]^\top$ \\

$v \neq \tilde v \neq v' \neq \tilde v'$ & $[\frac{1}{2} \ \frac{1}{3}]^\top$ & $[\frac{1}{2} \ \frac{1}{3}]^\top$ \\
\hline    
\end{tabular}
\label{tab:scen}
\end{table}

Finally, we want to show the satisfaction of condition~\eqref{eq:add2}, by showing that for some $e=(\tilde v, \sigma, \tilde v')$,~\eqref{eq:equiv2} is violated. For any $(v,\sigma,v') \in \tilde \E$, consider the following cases by picking a suitable $e = (v, \sigma, v')$:
\begin{itemize}
\item Case 1: $v = v'$. Then, we have $Q_v[\tilde e] = Q_{v'}[\tilde e] = [\frac{1}{3} \ \frac{1}{2}]^\top$. The second element here is greater than the first one, so condition~\eqref{eq:equiv2}, and thus~\eqref{eq:add2} is violated.
\item Case 2: $v \neq v'$. In this case, $e$ is an outgoing edge for $v$ and an incoming one for $v'$. Then, $Q_v[\tilde e] = [\frac{1}{3} \ \frac{1}{3}]^\top$ and $Q_v'[\tilde e] = [\frac{1}{2} \ \frac{1}{2}]^\top$. Then, ~\eqref{eq:equiv2} and correspondingly,~\eqref{eq:add2} are violated. $\qedsymb$
\end{itemize}
\end{proof}

We now have the required ingredients to prove Theorem~\ref{thm:simu}.

\begin{refproof}{\bf{of Theorem~\ref{thm:simu}.}}
We first show that if statement~\eqref{ite:simu} of Theorem~\ref{thm:simu} holds, then statement~\eqref{ite:ord} holds as well. Suppose that $R: \bar \V \rightarrow \V$ is a simulation relation between $\G$ and $\tilde \G$. Moreover, suppose that for a given dt-SwS $S$ as in~\eqref{eq:sysdyn} and the graph $\G$, a suitable PCBF $\B = (\G, \{\B_v, v \in \V\})$ of some arbitrary template $\mathcal{B}$ exists. From this, one can easily construct a PCBF $\bar \B = (\bar \G, \{\bar \B_{\bar v}, \bar v \in \bar \V\})$ corresponding to $\bar \G$ by simply assigning to each node $\bar v \in \bar \V$, a corresponding function $\bar \B_{\bar v} = \B_{R(\bar v)}$. Note that such a function automatically satisfies conditions~\eqref{eq:pcbf1}-\eqref{eq:pcbf2}. Moreover, since $R$ is a simulation relation, for every edge $(\bar v, \sigma, \bar v') \in \bar \E$, one has that  $\bar \B_{\bar v'}(f_\sigma(x)) = \B_{R(\bar v')}(f_\sigma(x)) \leq \B_{R(\bar v)}(x) = \bar \B_{\bar v}(x)$, where the inequality is due to~\eqref{eq:simu} and the fact that $\{\B_v, v \in \V\}$ corresponds to the PCBF for $\G$ which satisfies condition~\eqref{eq:pcbf3}.
Therefore, one has the guarantee that existence of PCBF $\B = (\G, \{\B_v, v \in \V\}) \in \mathcal{B}$ implies the existence of PCBF $\bar \B = (\G, \{\bar \B_{\bar v}, \bar v \in \bar \V\}) \in \mathcal{B}$, and statement~\eqref{ite:ord} holds.

Then, we proceed to prove the converse statement, i.e., show that if statement~\eqref{ite:ord} holds, then~\eqref{ite:simu} also holds. Suppose that $\G \preceq \bar \G$, i.e., regardless of the template $\mathcal{B}$ and the system $S$, existence of PCBF $\B = (\G, \{\B_v, v \in \V\}) \in \mathcal{B}$ implies the existence of $\bar \B = (\G, \{\bar \B_{\bar v}, \bar v \in \bar \V\}) \in \mathcal{B}$. To show that $\G$ simulates $\bar \G$, we first apply Lemma~\ref{lem:add} to $\G$ and obtain a dt-SwS $S$ and a PCBF $\B = (\G, \{\B_v, v \in \V \} )$ such that conditions~\eqref{eq:add1}-\eqref{eq:add2} are satisfied.
Now choose the template $\mathcal{B} = \{B_v, v \in \V\}$. Since $\G \preceq \bar \G$, there must also exist a PCBF $\bar \B = (\G, \{\bar \B_{\bar v}, \bar v \in \bar \V\}) \in \mathcal{B}$. Now, define a function $R: \bar \V \to \V$ such that $\bar \B_{\bar v} = \B_{R(\bar v)}$, for all $\bar v \in \bar \V$. Note that this is always possible since the considered barrier function template $\mathcal{B}$ consists of the functions $B_v$, $v \in \V$. So we can write  
\begin{equation}
\hspace{-0.8em} \forall (\bar v, \sigma, \bar v') \in \bar \E: \B_{R(\bar v')}(f_\sigma(x)) \leq \B_{R(\bar v)}(x), \forall x \in X. \label{eq:simu_cont}
\end{equation}
We claim that such a function is a simulation relation, i.e., it satisfies condition~\eqref{eq:simu}. Suppose that it is not, i.e., for some $(\bar v, \sigma, \bar v') \in \bar \E$, we have that $\big(R(\bar v), \sigma, R(\bar v')\big) \notin \E$. Then, by Lemma~\ref{lem:add}, we know that there exists some $x \in X$ such that $\B_{R(\bar v')}(f_\sigma(x)) > \B_{R(\bar v)}(x)$. However, this is a contradiction with equation~\eqref{eq:simu_cont}. The proof is now complete. $\qedsymb$
\end{refproof}

\section{Case Studies and Experiments} \label{sec:cs}

\subsection{Nonlinear Switched Systems} \label{subsec:cs1}

This case study aims to demonstrate the applicability of the path-complete framework for the safety verification of nonlinear switched dynamical systems. Specifically, we consider a two-car platoon example that exhibits polynomial dynamics, enabling us to utilize the sum-of-squares (SOS) algorithm~\cite{Parrilo2003} to compute suitable polynomial PCBFs. Note that for the sake of brevity, we do not present the SOS algorithm in our paper. However, we refer the interested reader to some prior works~\cite{prajna_worst_case,pushpak_cegis, anand_compositional_2024}, and note that the computation of PCBF follows similarly.
Consider two cars moving in a platoon formation with the following discrete-time system dynamics, adapted from~\cite{luppi_data-driven_2024}:

\begin{align*}
x_1(t + 1) = &x_1(t) -\beta_1 x_1(t) - \alpha_1 x_1^2(t) + u_1, \\
x_2(t+1) = & x_2(t) -\beta_1 x_2(t) - \alpha_1 x_2^2(t) + u_2,
\end{align*}
where $x_1, x_2$ are the velocities of the vehicles $1$ and $2$, respectively, and $u_1, u_2$ are the forces normalized by the vehicle mass. Moreover, $\alpha_1 = 0.02, \alpha_2 = 0.04$ and $\beta_1 = 0.1, \beta_2 = 0.2$ are the rolling and aerodynamic-drag friction coefficients of the vehicles, respectively. Suppose that the second car, i.e., car corresponding to $x_2$ is the leader and communicates its state with the first car corresponding to $x_1$ at every time instant, and the corresponding inputs $u_1 = 0.01x_2, u_2 = 2$ are applied to the cars, respectively. However, occasionally and randomly, the communication between the cars breaks which results in $u_1 = 0$. This corresponds to a switched system as in equation~\eqref{eq:sysdyn}  with two operating modes, given by the dynamics
\begin{align*}
f_1(x) = &\begin{bmatrix}
0.01x_2 + 0.9x_1-0.02x_1^2, \\
2+0.8x_2-0.04x_2^2
\end{bmatrix}, \\
f_2(x) = &\begin{bmatrix}
0.9x_1-0.02x_1^2, \\
2+0.8x_2-0.04x_2^2
\end{bmatrix}.
\end{align*}
Furthermore, we define the safety specification by sets $X=[0,10]^{2}$, $X_0 = \{x_1 \in [0,3], x_2 \in X \mid 1 \leq x_2 - x_1 \leq 2\}$, and $X_u = \{x_1,x_2 \in X \mid x_2-x_1 \geq 0.2\}$. Now, we try to find a suitable PCBF by a priori choosing the path-complete graph given in Figure~\ref{fig:pcgraph_platoon} and a quadratic barrier function template. By casting the PCBF conditions as sum-of-squares constraints and utilizing the YALMIP optimization toolbox~\cite{Lofberg2004} with sedumi~\cite{sturm1999using} as the underlying solver, we are able to find a PCBF $(\G, \{\B_v, v \in V\}$ where $B_{v_1}(x) = 7.24+3.35x_1 - 8.91x_2+ 1.23x_1^2 + 0.91x_2^2 -0.76x_1x_2$ and $B_{v_2}(x) = 6.19+3.96x_1-8.47x_2+1.18x_1^2+0.86x_2^2-0.9x_1x_2$. The simulations shown in Figure~\ref{fig:platoon_simu1} also demonstrate the satisfaction 
of the safety specifications, thus illustrating our theoretical results. 
\begin{figure}
\centering
\includegraphics[scale=1.5]{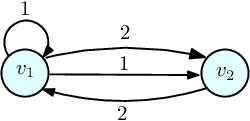}
\caption{PC graph $\mathcal{G}$ utilized for the examples in Section~\ref{subsec:cs1} and Section~\ref{subsec:cs_comp}.}  
\label{fig:pcgraph_platoon}
\end{figure}

\begin{figure}
\centering
\includegraphics[width=0.5\columnwidth]{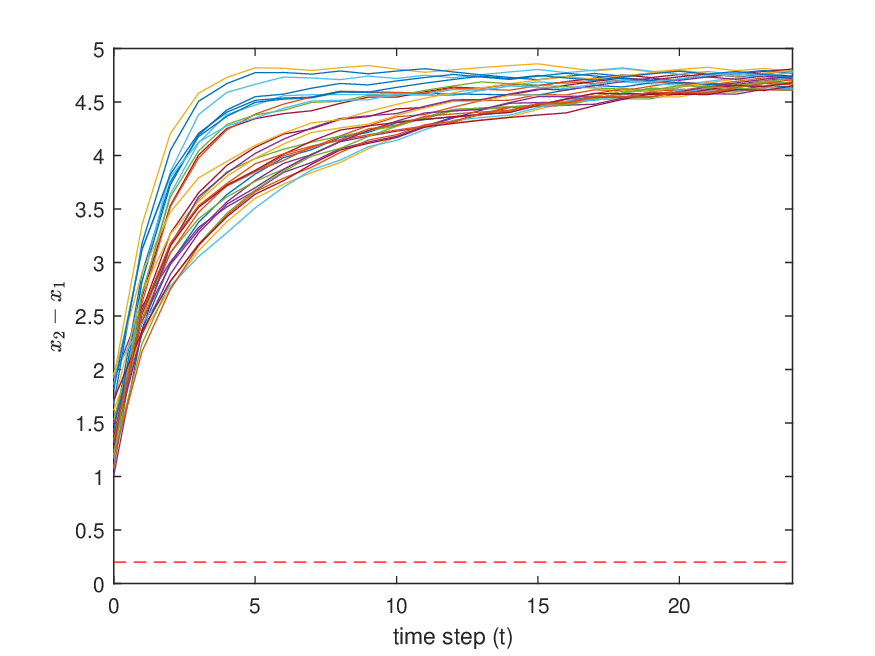}
\caption{Simulation corresponding to the platoon example in Subsection~\ref{subsec:cs1}. The value of $x_2-x_1$ is plotted against the time step $t$. The area below the red dotted line indicates the unsafe set.}
\label{fig:platoon_simu1}
\end{figure}

\subsection{Necessity of Path-Completeness}
\label{subsec:cs2}
Now for the two-car platoon example above, suppose that a control input of $u_1,u_2 = 0$ is applied when there is a communication breakdown, resulting in the modified second mode dynamics given by
\begin{equation*}
f_2(x) = \begin{bmatrix}
0.9x_1-0.02x_1^2, \\
0.8x_2-0.04x_2^2
\end{bmatrix},
\end{equation*}
while $f_1$ remains the same. We consider the same safety specification as considered above. Clearly, the simulations in Figure~\ref{fig:platoon_simu2} show that the switched system is unsafe. Now consider the non-path complete graph $\G_{np}$ obtained by removing the edge $(v_2,2,v_1)$ from the graph $\G$ in Figure~\ref{fig:pcgraph_platoon}. We attempt to find a quadratic graph-based barrier function $(\G_{np}, \{\B_v, v \in \V_{np})$ corresponding to the system using YALMIP. The obtained functions are given by $B_{v_1}(x) = 1.89+5.72x_1-6.45x_2+0.8x_1^2+0.86x_2^2-1.43x_1x_2$, and $B_{v_2}(x) = 3.93+4.94x_1-5.30x_2+0.51x_1^2+0.48x_2^2-0.82x_1x_2$. This means we risk obtaining invalid safety certificates even when the system is unsafe by utilizing non path-complete graphs instead of path-complete ones. 
\begin{figure}[t!]
\centering
\includegraphics[width=0.5\columnwidth]{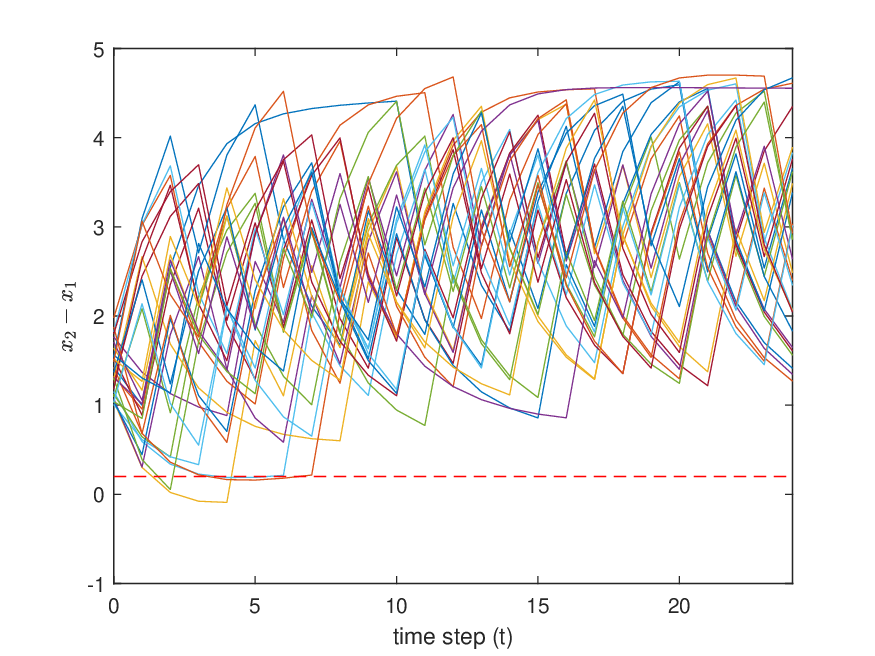}
\caption{Simulations corresponding to the platoon example in Subsection~\ref{subsec:cs2}. The simulations indicate the violation of safety specification.}
\label{fig:platoon_simu2}
\end{figure}

\subsection{Comparison of Path-Complete Barrier Functions}
\label{subsec:cs_comp}

To illustrate the comparison of two path-complete graphs that are related to each other via a simulation relation, we consider two path-complete graphs $\G$ and $\bar \G$, as shown in Figures~\ref{fig:pcgraph_platoon} and~\ref{fig:pcgraph_gbar}, respectively. Observe that $\G$ simulates $\bar \G$ via a simulation relation $R$ with $R(\bar v_1) = v_1, R(\bar v_2) = v_2$ and $R(\bar v_3) = v_1$. Now, we randomly sample $3000$ switched systems with two operating modes to examine the conservativeness of the PCBFs corresponding to $\G$ and $\bar \G$ respectively. To do so, we restrict ourselves to three-dimensional switched linear systems as in~\eqref{eq:sysdyn} given by $f_{\sigma}(x) = A_\sigma x$, $\sigma = \{1,2\}$. {We consider sets $X = \R^3$, $X_0 = \{x \in \R^n \mid \sum_{i=1}^3 x_i^2 \leq 4\}$ and $X_u = \{x \in \R^n \mid  \sum_{i=1}^3 x_i^2 \geq 9\}$. Moreover, we consider quadratic barrier functions in order to cast the PCBF conditions~\eqref{eq:pcbf1}-\eqref{eq:pcbf3} as linear matrix inequalities (LMIs)~\cite{anand_pcbf}. For each system and each mode $\sigma \in \{1,2\}$, we randomly generate $A_\sigma \in \R^{3 \times 3}$ such that its eigenvalues are less than $1$ to ensure stability. Specifically, the elements of the $3 \times 3$ matrices are obtained randomly from a uniform distribution. If the eigenvalues of are computed to be greater than $1$, the matrices are divided by a factor of $1.05$ until the eigenvalues are close to, but still less than, $1$. While stability in general is not required for safety verification, we restrict ourselves to stable systems to maximize the chances of finding PCBFs. 

Then, by solving the LMIs for each of the $3000$ systems we report that for $1435$ systems, we were not able to find suitable PCBFs corresponding to neither $\G$ nor $\bar \G$, and for $1511$ systems, we were able to find PCBFs corresponding to both graphs. However, there were $54$ cases for which we were able to find a PCBF corresponding to $\bar \G$ while the computation of PCBFs for $\G$ resulted in an infeasible optimization problem. On the other hand, there were no cases where we could find a PCBF corresponding to $\G$ when the computation for $\bar \G$ led to infeasibility. This shows that the graph $\bar \G$ tends to perform better in comparison to $\G$, and this is because $\bar \G$ consistently leads to less conservative PCBFs. This is a direct consequence of Theorem~\ref{thm:simu} because $\G$ simulates $\bar \G$. Note that the computations were performed using the YALMIP optimization toolbox~\cite{Lofberg2004} with sedumi~\cite{sturm1999using} as the underlying solver. 

\begin{figure}[t!]
\centering
\includegraphics[scale=1.2]{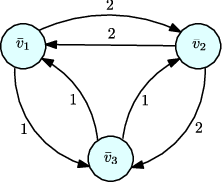}
\caption{PC graph $\bar \G$ utilized for the example in Section~\ref{subsec:cs_comp}.}  
\label{fig:pcgraph_gbar}
\end{figure}

\section{Conclusion and Discussion} 
\label{sec:conclusion}

In this paper, we addressed some important questions regarding the graph-based methodology for the safety verification of switched systems via path-complete barrier functions. We showed that it is necessary that a graph is path-complete to ensure the validity of the resulting safety certificates via barrier functions.  As a consequence, our framework provides a complete set of graph-based safety criteria for switched systems. Moreover, we proposed a systematic methodology for comparing two different path-complete graphs and the conservatism associated with the resulting safety conditions. In particular, we leveraged a notion of simulation relation between two graphs and showed that if this relation holds, one graph always performs better than the other and leads to less conservative results. This result is independent of the class of systems considered (i.e., linear, polynomial, etc.) as well as the algebraic templates of the barrier functions (quadratic, polynomial, etc.). Correspondingly, we demonstrated the soundness of our theoretical results with several case studies and experiments. 

Although the paper significantly advances the applicability of the path-complete framework for safety verification, there are still some gaps to be addressed. In particular, this paper proved that when two graphs are related by a simulation relation, one can characterize the relative conservatism of the resulting safety conditions irrespective of the considered system and the barrier function template. However, it is unclear how to compare the resulting conditions when two graphs are unrelated.
For example, consider the experiment performed in Section~\ref{subsec:cs_comp}. Suppose that we modify the graph $\bar \G$ in Figure~\ref{fig:pcgraph_gbar} slightly by reversing the edges $(\bar v_3,  1, \bar v_2)$ and $(\bar v_2, 2, \bar v_3)$. Then, we see that $\bar \G$ no longer simulates the graph $\G$ in Figure~\ref{fig:pcgraph_platoon} due to the presence of the edge $(\bar v_2,  1, \bar v_3)$ that can no longer be matched by $\G$ which has no outgoing edge labeled $1$ from $v_2$. However, we observed results similar to that of the experiment in this case, i.e., $\bar \G$ consistently provided less conservative results than $\G$. The reason for this is not yet clear. However, we postulate that there may be some specific system classes and barrier function templates for which one may characterize an ordering relation between graphs even when no simulation relation exists. Some results in this direction have been presented in the context of path-complete Lyapunov functions~\cite{pclf_ordering2, temp_lifts_pclf,jungers_statistical_2024}.

Finally, another interesting question is about the \emph{quality} of the computed safe set, i.e, the invariant set: while the stability problem is essentially a 'yes or no' answer, for safety analysis, the quality of the computed invariant set (e.g. its volume) may matter in practice. The relationship between the properties of a particular path-complete graph and the invariant set quantified by the corresponding PCBF is far from clear to us and would deserve further investigation.

\bibliographystyle{IEEEtran}
\bibliography{root.bib}

\begin{thebibliography}{10}
\providecommand{\url}[1]{#1}
\csname url@samestyle\endcsname
\providecommand{\newblock}{\relax}
\providecommand{\bibinfo}[2]{#2}
\providecommand{\BIBentrySTDinterwordspacing}{\spaceskip=0pt\relax}
\providecommand{\BIBentryALTinterwordstretchfactor}{4}
\providecommand{\BIBentryALTinterwordspacing}{\spaceskip=\fontdimen2\font plus
\BIBentryALTinterwordstretchfactor\fontdimen3\font minus \fontdimen4\font\relax}
\providecommand{\BIBforeignlanguage}[2]{{%
\expandafter\ifx\csname l@#1\endcsname\relax
\typeout{** WARNING: IEEEtran.bst: No hyphenation pattern has been}%
\typeout{** loaded for the language `#1'. Using the pattern for}%
\typeout{** the default language instead.}%
\else
\language=\csname l@#1\endcsname
\fi
#2}}
\providecommand{\BIBdecl}{\relax}
\BIBdecl

\bibitem{hybrid_chem}
B.~Lennartson, M.~Tittus, B.~Egardt, and S.~Pettersson, ``Hybrid systems in process control,'' \emph{IEEE Control Systems Magazine}, vol.~16, no.~5, pp. 45--56, 1996.

\bibitem{hybrid_robotics}
J.~Ding, J.~H. Gillula, H.~Huang, M.~P. Vitus, W.~Zhang, and C.~J. Tomlin, ``Hybrid systems in robotics,'' \emph{IEEE Robotics \& Automation Magazine}, vol.~18, no.~3, pp. 33--43, 2011.

\bibitem{hybrid_bio}
K.~Aihara and H.~Suzuki, ``Theory of hybrid dynamical systems and its applications to biological and medical systems,'' vol. 368, no. 1930, pp. 4893--4914, 2010.

\bibitem{hybrid_traffic}
C.~Tomlin, G.~Pappas, J.~Lygeros, D.~Godbole, S.~Sastry, and G.~Meyer, ``Hybrid {{Control}} in {{Air Traffic Management Systems1}},'' vol.~29, no.~1, pp. 5512--5517, 1996.

\bibitem{Liberzon2003SwitchingIS}
D.~Liberzon, ``Switching in systems and control,'' in \emph{Systems \& Control: Foundations \& Applications}, 2003.

\bibitem{switched_opt}
F.~Zhu and P.~J. Antsaklis, ``Optimal control of hybrid switched systems: {{A}} brief survey,'' \emph{Discrete Event Dynamic Systems}, vol.~25, no.~3, pp. 345--364, 2015.

\bibitem{switched_stab}
R.~Shorten, F.~Wirth, O.~Mason, K.~Wulff, and C.~King, ``Stability criteria for switched and hybrid systems,'' \emph{SIAM Review}, vol.~49, no.~4, pp. 545--592, 2007.

\bibitem{switched_abs1}
A.~Girard, G.~Pola, and P.~Tabuada, ``Approximately bisimilar symbolic models for incrementally stable switched systems,'' \emph{IEEE Transactions on Automatic Control}, vol.~55, no.~1, pp. 116--126, 2010.

\bibitem{switched_abs2}
J.~Camara, A.~Girard, and G.~Gössler, ``Safety controller synthesis for switched systems using multi-scale symbolic models,'' in \emph{2011 50th IEEE Conference on Decision and Control and European Control Conference}, 2011, pp. 520--525.

\bibitem{switched_abs3}
M.~Zamani, A.~Abate, and A.~Girard, ``Symbolic models for stochastic switched systems: {{A}} discretization and a discretization-free approach,'' \emph{Automatica}, vol.~55, pp. 183--196, 2015.

\bibitem{prajna_safety_2004}
S.~Prajna and A.~Jadbabaie, ``\BIBforeignlanguage{en}{Safety {verification} of {hybrid} {systems} {using} {barrier} {certificates}},'' in \emph{\BIBforeignlanguage{en}{Proceedings of the 7th ACM International Conference on Hybrid Systems: Computation and Control}}, 2004, pp. 477--492.

\bibitem{prajna_barrier_2006}
S.~Prajna, ``\BIBforeignlanguage{en}{Barrier certificates for nonlinear model validation},'' \emph{\BIBforeignlanguage{en}{Automatica}}, vol.~42, no.~1, pp. 117--126, 2006.

\bibitem{cbf_switch1}
A.~K{\i}v{\i}lc{\i}m, O.~Karabacak, and R.~Wisniewski, ``Safety verification of nonlinear switched systems via barrier functions and barrier densities,'' in \emph{18th European Control Conference}, 2019, pp. 776--780.

\bibitem{cbf_switch2}
------, ``Safe reachability verification of nonlinear switched systems via a barrier density,'' in \emph{58th Conference on Decision and Control (CDC)}, 2019, pp. 2368--2372.

\bibitem{asymp_stab_safe}
Q.~Liu and L.~Long, ``Asymptotic stability with guaranteed safety for switched nonlinear systems: A multiple barrier functions method,'' \emph{IEEE Transactions on Systems, Man, and Cybernetics: Systems}, vol.~52, no.~6, pp. 3581--3590, 2022.

\bibitem{exp_barr}
H.~Kong, X.~Song, D.~Han, M.~Gu, and J.~Sun, ``A new barrier certificate for safety verification of hybrid systems,'' \emph{The Computer Journal}, vol.~57, no.~7, pp. 1033--1045, 2014.

\bibitem{state-time-safety}
G.~Wang, J.~Liu, H.~Sun, J.~Liu, Z.~Ding, and M.~Zhang, ``Safety verification of state/time-driven hybrid systems using barrier certificates,'' in \emph{35th Chinese Control Conference (CCC)}, 2016, pp. 2483--2489.

\bibitem{anand_switched}
M.~Anand, P.~Jagtap, and M.~Zamani, ``Verification of switched stochastic systems via barrier certificates,'' in \emph{58th Conference on Decision and Control}, 2019, pp. 4373--4378.

\bibitem{stoc_swit}
A.~Nejati, S.~Soudjani, and M.~Zamani, ``Compositional construction of control barrier certificates for large-scale stochastic switched systems,'' \emph{IEEE Control Systems Letters}, vol.~4, no.~4, pp. 845--850, 2020.

\bibitem{stoc_hyb}
A.~Lavaei, S.~Soudjani, and E.~Frazzoli, ``Safety barrier certificates for stochastic hybrid systems,'' in \emph{American Control Conference}, 2022, pp. 880--885.

\bibitem{nejati_comp_stoc}
A.~Nejati, S.~Soudjani, and M.~Zamani, ``Compositional construction of control barrier functions for continuous-time stochastic hybrid systems,'' \emph{Automatica}, vol. 145, p. 110513, 2024.

\bibitem{pclf_original}
A.~A. Ahmadi, R.~Jungers, P.~A. Parrilo, and M.~Roozbehani, ``Analysis of the joint spectral radius via lyapunov functions on path-complete graphs,'' in \emph{Proceedings of the 14th International Conference on Hybrid Systems: Computation and Control}, 2011, p. 13–22.

\bibitem{pclf_common}
D.~Angeli, N.~Athanasopoulos, R.~M. Jungers, and M.~Philippe, ``Path-complete graphs and common lyapunov functions,'' in \emph{Proceedings of the 20th International Conference on Hybrid Systems: Computation and Control}, 2017, p. 81–90.

\bibitem{geom_pclf}
M.~Philippe, N.~Athanasopoulos, D.~Angeli, and R.~M. Jungers, ``On path-complete lyapunov functions: Geometry and comparison,'' \emph{IEEE Transactions on Automatic Control}, vol.~64, no.~5, pp. 1947--1957, 2019.

\bibitem{pclf_ordering1}
M.~Philippe and R.~M. Jungers, ``A complete characterization of the ordering of path-complete methods,'' in \emph{Proceedings of the 22nd ACM International Conference on Hybrid Systems: Computation and Control}, 2019, p. 138–146.

\bibitem{pclf_ordering2}
V.~Debauche, M.~Della~Rossa, and R.~M. Jungers, ``Necessary and sufficient conditions for template-dependent ordering of path-complete lyapunov methods,'' in \emph{Proceedings of the 25th ACM International Conference on Hybrid Systems: Computation and Control}, 2022.

\bibitem{temp_lifts_pclf}
------, ``Comparison of path-complete {{Lyapunov}} functions via template-dependent lifts,'' \emph{Nonlinear Analysis: Hybrid Systems}, vol.~46, p. 101237, 2022.

\bibitem{anand_pcbf}
M.~Anand, R.~Jungers, M.~Zamani, and F.~Allgöwer, ``Path-complete barrier functions for safety of switched linear systems,'' in \emph{63rd Conference on Decision and Control (CDC)}, 2024, pp. 7423--7428.

\bibitem{jungers_characterization_2017}
R.~M. Jungers, A.~A. Ahmadi, P.~A. Parrilo, and M.~Roozbehani, ``A characterization of lyapunov inequalities for stability of switched systems,'' vol.~62, no.~6, pp. 3062--3067, 2025.

\bibitem{digraphs}
J.~Bang-Jensen and G.~Z. Gutin, \emph{Digraphs: Theory, Algorithms and Applications}, ser. Springer {{Monographs}} in {{Mathematics}}.\hskip 1em plus 0.5em minus 0.4em\relax Springer.

\bibitem{prajna_worst_case}
S.~Prajna, A.~Jadbabaie, and G.~J. Pappas, ``A framework for worst-case and stochastic safety verification using barrier certificates,'' \emph{IEEE TAC}, vol.~52, no.~8, pp. 1415--1428, 2007.

\bibitem{pushpak_cegis}
P.~{Jagtap}, S.~{Soudjani}, and M.~{Zamani}, ``Formal synthesis of stochastic systems via control barrier certificates,'' \emph{IEEE Transactions on Automatic Control}, vol.~66, no.~7, pp. 3097--3110, 2021.

\bibitem{zhao_learning_2021}
H.~Zhao, X.~Zeng, T.~Chen, Z.~Liu, and J.~Woodcock, ``Learning safe neural network controllers with barrier certificates,'' \emph{Formal Aspects of Computing}, vol.~33, no.~3, pp. 437--455, 2021.

\bibitem{anand_nn}
M.~Anand and M.~Zamani, ``Formally verified neural network control barrier certificates for unknown systems,'' in \emph{22nd {IFAC} World Congress}, vol.~56, no.~2, 2023, pp. 2431--2436.

\bibitem{Parrilo2003}
P.~A. Parrilo, ``Semidefinite programming relaxations for semialgebraic problems,'' \emph{Mathematical Programming}, vol.~96, no.~2, pp. 293--320, 2003.

\bibitem{anand_compositional_2024}
M.~Anand, A.~Lavaei, and M.~Zamani, ``Compositional synthesis of control barrier certificates for networks of stochastic systems against $\omega$-regular specifications,'' vol.~51, p. 101427, 2024.

\bibitem{luppi_data-driven_2024}
\BIBentryALTinterwordspacing
A.~Luppi, A.~Bisoffi, C.~De~Persis, and P.~Tesi, ``Data-driven design of safe control for polynomial systems,'' in \emph{European Journal of Control}, vol.~75, 2024, p. 100914. [Online]. Available: \url{https://www.sciencedirect.com/science/article/pii/S0947358023001425}
\BIBentrySTDinterwordspacing

\bibitem{Lofberg2004}
J.~L{\"{o}}fberg, ``Yalmip : A toolbox for modeling and optimization in matlab,'' in \emph{IEEE International Conference on Robotics and Automation}, 2004, pp. 284--289.

\bibitem{sturm1999using}
J.~F. Sturm, ``Using {SeDuMi} 1.02, a {MATLAB} toolbox for optimization over symmetric cones,'' \emph{Optimization methods and software}, vol.~11, pp. 625--653, 1999.

\bibitem{jungers_statistical_2024}
\BIBentryALTinterwordspacing
R.~M. Jungers, ``Statistical comparison of path-complete lyapunov functions: a discrete-event systems perspective,'' in \emph{17th {IFAC} Workshop on discrete Event Systems ({WODES})}, vol.~58, no.~1, 2024, pp. 258--263. [Online]. Available: \url{https://www.sciencedirect.com/science/article/pii/S2405896324001228}
\BIBentrySTDinterwordspacing

\end{thebibliography}

\end{document}